\newcommand{\softO}{\widetilde{O}}
\renewcommand{\epsilon}{\varepsilon}
\newcommand{\eft}{\textsc{EFT}\xspace}
\newcommand{\vmark}{\ensuremath{\ell}}
\newtheorem{proposition}{Proposition}
\newtheorem{lemma}{Lemma}
\newtheorem{theorem}{Theorem}
\newtheorem{corollary}{Corollary}
\title{Compact and Fast Sensitivity Oracles for Single-Source Distances\thanks{A preliminary version of this work was accepted at the 24th European Symposium on Algorithms (ESA'16).}}
\author{Davide~Bil\`o}
\affil{Dipartimento di Scienze Umanistiche e Sociali, Università di Sassari, Italy. \texttt{davide.bilo@uniss.it}}
\author{Luciano~Gual\`a}
\affil{Dipartimento di Ingegneria dell'Impresa, Università di Roma ``Tor Vergata'', Italy. \texttt{guala@mat.uniroma2.it}}
\author{Stefano~Leucci}
\affil{Dipartimento di Informatica, ``Sapienza'' Università di Roma, Italy. \texttt{leucci@di.uniroma1.it}}
\author{Guido~Proietti}
\affil{Dipartimento di Ingegneria e Scienze dell'Informazione e Matematica, Università degli Studi dell'Aquila, Italy.
IASI, CNR, Roma, Italy. \texttt{guido.proietti@univaq.it}}
\date{}
\begin{document}

\maketitle

\begin{abstract}
Let $s$ denote a distinguished source vertex of a non-negatively real weighted and undirected graph $G$ with $n$ vertices and $m$ edges.
In this paper we present two efficient \emph{single-source approximate-distance sensitivity oracles}, namely \emph{compact} data structures which 
are able to \emph{quickly} report an approximate (by a multiplicative stretch factor) distance from $s$ to any node of $G$ following the failure of any edge in $G$.
More precisely, we first present a sensitivity oracle of size $O(n)$ which 
is able to report 2-approximate distances from the source in $O(1)$ time.
Then, we further develop our construction by building, for any $0<\epsilon<1$,  
another sensitivity oracle having size $O\left(n\cdot \frac{1}{\epsilon} \log \frac{1}{\epsilon}\right)$, and which is able to report a $(1+\epsilon)$-approximate distance from $s$ to any vertex of $G$ in $O\left(\log n\cdot \frac{1}{\epsilon} \log \frac{1}{\epsilon}\right)$ time.
Thus, this latter oracle is essentially optimal as far as size and stretch are concerned, and it only asks for a logarithmic query time.  
Finally, our results are complemented with a space lower bound for the related class of single-source \emph{additively-stretched} sensitivity oracles, which is helpful to realize the hardness of designing compact oracles of this type.
\end{abstract}

\section{Introduction}
The term \emph{distance oracle} was coined by Thorup and Zwick \cite{TZ05}, to emphasize the quality of a data structure that, despite its sparseness, is able to report very quickly provably good approximate distances between any pair of nodes in a graph. Indeed, it is well-known that in huge graphs the trade-off between time and space for \emph{exact} distance queries is a very critical issue: at its extremes, either we use a quadratic (unfeasible) space to reply in constant time, or we use a linear space to reply at an unsustainable large time. Thus, a wide body of literature focused on the problem of developing intermediate solutions in between these two opposite approaches, with the goal of designing more and more compact and fast oracles. This already complex task is further complicated as soon as edge or vertex failures enter into play: here, the oracle should be able to return (approximate) distances following the failure of some component(s) in the underlying graph, or in other words to be \emph{fault-tolerant}, thus introducing an additional overload to the problem complexity. This kind of oracle is also known as \emph{distance sensitivity oracle}. In this paper we focus our attention on a such challenging scenario, but we restrict our attention to the prominent case in which concerned distances are from a fixed source only, which is of special interest in several network-based applications.

\subsection{Related work.}
Let $s$ denote a distinguished source vertex of a non-negatively real weighted and undirected $n$-vertex and $m$-edge graph $G=(V(G),E(G),w)$.
For the sake of avoiding technicalities, we assume that $G$ is 2-edge-connected, although this assumption can be easily relaxed without affecting our results.
A \emph{single-edge-fault-tolerant} $\alpha$-\emph{single-source distance oracle} (EFT $\alpha$-SSDO in the following), with $\alpha \geq 1$, is a data structure that for any $v \in V(G)$ and any $e \in E(G)$ is able to return an estimate of the distance in $G-e$ (i.e., the graph $G$ deprived by $e$) between $s$ and $v$, say $d_{G-e}(s,v)$, within the range $[d_{G-e}(s,v),\alpha \cdot d_{G-e}(s,v)]$. The term $\alpha$ is a.k.a. the \emph{stretch factor} of the oracle.

A natural counterpart of such an oracle is an EFT $\alpha$-\emph{approximate shortest-path tree} ($\alpha$-ASPT), i.e., a subgraph of $G$ which, besides a SPT of $G$ rooted at $s$,  contains $\alpha$-stretched shortest paths from $s$ after the failure of any edge $e$ in $G$.
Such a structure is also known as a \emph{single-source EFT $\alpha$-spanner}. In some sense, a SSDO aims to convert in an explicit form the distance information that a corresponding ASPT may retain just in an implicit form, similarly to the process of maintaining in an $n$-size array
all the distances from the source induced by the paths of a corresponding SPT.
However, such a conversion process is far to be trivial in general and should be accomplished carefully, since the exploitation of the implicit information may introduce a dilatation in the final size of the oracle.

While the study of sensitivity oracles for all-pairs distances started right after the first appearance of \cite{TZ05}, the single-source case was faced only later. More precisely, in \cite{DBLP:journals/siamcomp/DemetrescuTCR08} it was first proven that if we aim at \emph{exact} distances, then $\Theta(n^2)$ space may be needed, already for undirected graphs and single edge failures, and independently of the query time. Then, in \cite{BK13} the authors build
in $O(m \log n + n \log^2 n)$ time a \emph{single-vertex-fault-tolerant} (VFT) 3-SSDO of size $O(n \log n)$ and with constant query time. In the same paper, for \emph{unweighted} graphs and for any
$\epsilon >0$, the authors build in
$O(m\sqrt{n/\epsilon})$ time a VFT $(1+\varepsilon)$-SSDO of size $O(\frac{n}{\varepsilon^3}+n \log n)$ and with constant query time. Both oracles are \emph{path reporting}, i.e., they are able to report the corresponding approximate shortest path from the source in time proportional to the path size. Moreover, as discussed in \cite{BGLP14}, in both oracles/spanners the log-term in the size can be removed if \emph{edge} failures are considered, instead of vertex failures. Finally, they can easily be transformed into corresponding E/VFT ASPTs having a same size and stretch. As far as this latter result is concerned, this was improved in \cite{BGLP14}, where it was given, for any (even non-constant) $\epsilon>0$, an E/VFT $(1+\varepsilon)$-ASPT of size $O(\frac{n \log n}{\varepsilon^2})$, without providing a corresponding oracle, though.

Summarizing, we therefore have the following state-of-the-art for EFT SSDOs: if we insist on having linear-size and constant query time, then a $(1+\varepsilon)$-stretch can be obtained only for unweighted graphs, while for weighted graphs the best current stretch is 3. Actually, this latter value can be reduced only by either paying a quadratic size (by storing for every $e \in E(G)$, the explicit distances from $s$ in $G-e$), or an almost linear size but a super-linear query time (by storing and then inspecting the structure provided in \cite{BGLP14}). So, the main open question is the following: can we develop a good space-time trade-off (ideally, linear space and constant query time) by guaranteeing a stretch less than 3 (ideally, arbitrarily close to 1)?
In this paper, we make significant progresses in this direction.

\subsection{Our results.}
Our main result is, for any arbitrary small $\epsilon>0$, the construction in $O(mn+n^2\log n)$ time and $O\left(m+n\cdot \frac{1}{\epsilon} \log \frac{1}{\epsilon}\right)$ space of an EFT $(1+\epsilon)$-SSDO
having size $O\left(n\cdot \frac{1}{\epsilon} \log \frac{1}{\epsilon}\right)$ and query time $O\left(\log n\cdot \frac{1}{\epsilon} \log \frac{1}{\epsilon}\right)$. 
Thus, when $\epsilon$ is constant w.r.t. $n$, we get close to the ideal situation we were depicting above: our oracle has linear space, stretch arbitrarily close to 1, and a logarithmic query time. Moreover, it is interesting to notice that size and query time have an almost linear dependency on $1/\epsilon$.

To the best of our knowledge, this is the first EFT SSDO guaranteeing a $(1+\epsilon)$-stretch factor on weighted graphs.
Interestingly, our construction is not obtained by the EFT $(1+\epsilon)$-ASPT of size $O(\frac{n \log n}{\varepsilon^2})$ given in \cite{BGLP14}, whose conversion to a same size-stretch trade-off oracle sounds very hard, and is instead based on a quite different approach. More precisely, to get our size and query time bounds, we select a subset of \emph{landmark} nodes of $G$, and for each one of them we store $O\left(\frac{1}{\epsilon} \log \frac{1}{\epsilon}\right)$  \emph{exact} post-failure (for an appropriate set of failing edges) distances from $s$. Then, when an edge $e$ fails and we want to retrieve an approximate distance from $s$ towards a fixed destination node $t$, we efficiently select with the promised query time a pivotal landmark node that actually sits on a path in $G-e$ from $s$ to $t$ whose length is within the bound.
Notice that such a path is not \emph{explicitly} stored in our oracle, so unfortunately we cannot return it in a time proportional to its size (besides the query time). In other words, our oracle is not inherently path-reporting, an we leave this point as a challenging open problem.

To get the reader acquainted with our technique, we first develop in $O(mn+n^2\log n)$ time and $O(m)$ space an EFT $2$-SSDO of size $O(n)$ and constant query time. This result is of independent interest, since it is the first EFT SSDO with both optimal size and query time having a stretch better than the long-standing barrier of 3.
In this other oracle, once again we select a subset of landmark nodes of $G$, but in this case, to get the promised stretch, we do not need to maintain explicitly any exact distances towards them. Rather, for the failure of an edge $e$ of $G$ and for a fixed destination node $t$, a structural property of 2-stretched post-failure paths will allow us to return the 2-approximate distance from $s$ by simply understanding whether there exists a pivotal landmark node associated with $t$. Actually, we show that such an association can be established by formulating a corresponding \emph{bottleneck vertex query} problem on a rooted tree, that can be answered in $O(1)$ time by using a linear-size efficient data structure developed in \cite{DemaineLW14}.
  
 Finally, in order to better appreciate the quality of our former oracle, we provide a lower bound on the bit size of any EFT $\beta$-\emph{additive} SSDO, i.e., an oracle which is able to report a distance from $s$ following an edge failure which is exact unless an additive term $\beta$. Notice that for weighted graphs, as in our setting, it only makes sense that such a $\beta$ is depending on the actual queried distance $d$.
Notice also that our linear-size EFT $(1+\epsilon)$-SSDO can be revised as an EFT $(\epsilon \cdot d)$-additive SSDO.
 So, a naturally arising question is: for a given $0<\delta \le1$, can we devise a compact EFT $(\epsilon \cdot d^{1-\delta})$-additive SSDO? We provide an answer in the negative, by showing a class of graphs for which a corresponding set of oracles of this sort would contain at least an element of $\Omega(n^2)$ bit size, regardless of its query time.

\subsection{Other related results.}
Besides the aforementioned related work on single-source distance sensitivity oracles, we mention some further papers on the topic.
For directed graphs with integer positive edge weights bounded by $M$, in \cite{GW12} the authors
show how to build efficiently in $\softO(M n^{\omega})$ time a \emph{randomized} EFT SSDO of size $\Theta(n^2)$ and with $O(1)$ query time,
where returned distances are exact w.h.p., and $\omega< 2.373$ denotes the matrix
multiplication exponent. As far as \emph{multiple} edge failures are concerned, in \cite{DBLP:conf/stacs/BiloG0P16}, for the failure of any set $F \subseteq E(G)$ of at most $f$ edges of $G$, the authors build in $O(f m\, \alpha(m,n)+fn \log^3 n)$ time an $f$-EFT $(2|F|+1)$-SSDO of size $O(\min\{m,fn\} \log^2 n)$, with a query time of $O(|F|^2 \log^2 n)$, and that is also able to report the corresponding path in the same time plus the path size. Notice that this oracle is obtained by converting a corresponding single-source $f$-EFT spanner having size $O(fn)$ and a same stretch. Notice also that if one is willing to use $O(m \log^2 n)$ space, such oracle will be able to handle any number of edge failures (i.e., up to $m$).
Recently in \cite{SIROCCO15}, the authors faced the special case of \emph{shortest-path failures}, in which the failure of a set $F$ of at most $f$ adjacent edges along any source-leaf path has to be tolerated. For this problem, they build
in $O(n(m+f^2))$ time, a $(2k-1)(2|F|+1)$-SSDO of size $O(kn \,f^{1+1/k})$ and constant query time, where $|F|$ denotes the size of the actual failing path, and $k\geq 1$ is a parameter of choice.
Moreover, for the special case of $f=2$, they give an ad-hoc solution, i.e., a 3-SSDO that can be built in $O(nm+n^2 \log n)$ time, has size $O(n \log n)$ and constant query time.

In the past, several other research efforts have been devoted to \emph{all-pairs distance oracles} (APDO) tolerating single/multiple edge/vertex failures. Quite interestingly, here $\softO(n^2)$-size exact-distance sensitivity oracles are instead known, as opposed to the $\Omega(n^2)$ lower bound for the single-source case. More precisely, in \cite{BK09} the authors built (on directed graphs) in $\softO(mn)$ time
a 1-E/VFT $1$-APDO of size $\softO(n^2)$ and with query time $O(1)$.
For two failures, in \cite{DP09} the authors built, still on directed graphs, a 2-E/VFT $1$-APDO of size $\softO(n^2)$ and with query time $O(\log n)$.
Concerning multiple-edge failures,  in \cite{CLPR10} the authors built, for any
integer $k \geq 1$, an $f$-EFT $(8k - 2)(f + 1)$-APDO of size $O(fk \, n^{1+1/k}
\log (n W))$, where $W$ is the ratio of the maximum to the minimum edge weight
in $G$, and with a query time of
$\softO(|F| \, \log \log d)$, where $F$ is the actual set of failing edges, and $d$ is the distance between the
queried pair of nodes in $G-F$.

As we said before, the natural counterpart of distance sensitivity oracles are the fault-tolerant spanners.
Due to space limitations, for this related topic we refer the reader to the discussion and the references provided in \cite{DBLP:conf/stacs/BiloG0P16}. However, it is worth mentioning that there is a line of papers on EFT ASPTs \cite{NPW03,Parter15,PP13,PP14}, that as we said are very close in spirit to EFT SSDOs.

Finally, we mention that there is a large body of literature concerned with the design of ordinary (i.e., fault-free) distance oracles, and an extensive recent survey on the topic is given in \cite{DBLP:journals/csur/Sommer14}.

\subsection{Notation}
For two given vertices $x$ and $y$ of an edge weighted graph $H$, we denote by $\pi_H(x,y)$ a shortest path between $x$ and $y$ in $H$ and we denote by $d_H(x,y)$ the total length of $\pi_H(x,y)$. For two given paths $P$ and $P'$ such that $P$ is a path between $x$ and $y$ and $P'$ is a path between $y$ and $z$, we denote by $P\circ P'$ the path from $x$ to $z$ obtained by concatening $P$ and $P'$.

Let $T$ be an SPT of $G$ rooted at $s$, and let $e=(u,v)$ be an edge of $T$. In the rest of the paper, we always assume that $u$ is closer to $s$ than $v$ w.r.t. the number of hops in $T$. Furthermore, we denote by $T_v$ the subtree of $T$ rooted at $v$. Finally, for a vertex $t \in T_v$, we denote by $A(t,e)=V(\pi_T(v,t))$ the set of \emph{living ancestors} of $T$, $t$ included, contained in $T_v$.

\section{The EFT \texorpdfstring{$2$-SSDO}{2-SSDO}}
In this section we describe our EFT $2$-SSDO with linear size and constant query time. Some of the ideas we develop here will be used in the next section, where we provide our main result.

For the rest of the paper, let $T$ be a fixed SPT of $G$ rooted at $s$ that is stored in our distance oracle. First of all, observe that if there is no edge failure or the edge that has failed is not contained in $T$, then, for any vertex $t$, our distance oracle can return the (exact) distance value $d_T(s,t)$ in constant time. This is the case also when the edge $e=(u,v)$ that has failed is contained in $T$, but $t$ is not a vertex of $T_v$.
Therefore, in the rest of this section, we describe only how our distance oracle computes an approximate distance from $s$ to $t$ in $G-e$ when the edge $e=(u,v)$ that has failed is contained in $T$ and the vertex $t$ is contained in the subtree $T_v$.

The following lemma describes a simple but still interesting property that we exploit as key ingredient in our oracle. Let $e=(u,v)$ be a failing edge, we define a special replacement path from $s$ to $t$ as follows: $P_e(t) = \pi_{G-e}(s,v) \circ \pi_G(v,t)$.

\begin{lemma}
	\label{lemma:2_apx_conditions}
	Let $e=(u,v)$ be a failing edge and $t \in V(T_v)$. At least one of the following conditions holds: (i) $d_{G-e}(s,t) \le w(P_e(t)) \le 2 d_{G-e}(s,t)$, (ii) $d_{G-e}(s,t) < 2 d_{G}(s,t)$.
\end{lemma}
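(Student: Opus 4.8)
The claim compares three quantities: the true post-failure distance $d_{G-e}(s,t)$, the length $w(P_e(t))$ of the special replacement path, and the original distance $d_G(s,t)$. Since $P_e(t)$ is an actual path from $s$ to $t$ in $G-e$ (it uses $\pi_{G-e}(s,v)$, which avoids $e$, followed by $\pi_G(v,t)$, which lies in $T_v$ and hence also avoids $e=(u,v)$), the lower bound $d_{G-e}(s,t) \le w(P_e(t))$ in condition (i) is immediate and needs no work. So the real content is: \emph{either} $w(P_e(t)) \le 2 d_{G-e}(s,t)$, \emph{or} $d_{G-e}(s,t) < 2 d_G(s,t)$. My plan is to assume condition (ii) fails — i.e. $d_{G-e}(s,t) \ge 2 d_G(s,t)$ — and derive condition (i).

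**The main estimate.**

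Write $w(P_e(t)) = d_{G-e}(s,v) + d_G(v,t)$, using that $\pi_G(v,t)$ is a shortest path in $G$ of length $d_G(v,t) = d_T(v,t)$ (the tree path), and that $\pi_{G-e}(s,v)$ has length $d_{G-e}(s,v)$. The key observation is that $d_{G-e}(s,v) \le d_{G-e}(s,t) + d_{G-e}(t,v)$ by the triangle inequality, and since $\pi_G(v,t) = \pi_T(v,t)$ avoids $e$, we have $d_{G-e}(t,v) \le d_G(v,t) = d_T(v,t)$. Also $d_G(v,t) \le d_G(s,t)$ is false in general, but I do have $d_G(v,t) = d_G(s,t) - d_G(s,v)$ along the tree since $v$ is an ancestor of $t$ in $T$ — wait, more carefully: $t \in T_v$, so the tree path from $s$ to $t$ passes through $v$, giving $d_G(s,t) = d_G(s,v) + d_G(v,t) = d_T(s,u) + w(u,v) + d_T(v,t)$, hence $d_G(v,t) \le d_G(s,t)$. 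Putting these together:
\[
w(P_e(t)) = d_{G-e}(s,v) + d_G(v,t) \le \big(d_{G-e}(s,t) + d_G(v,t)\big) + d_G(v,t) \le d_{G-e}(s,t) + 2 d_G(s,t).
\]
Now if condition (ii) fails, $2 d_G(s,t) \le d_{G-e}(s,t)$, so $w(P_e(t)) \le d_{G-e}(s,t) + d_{G-e}(s,t) = 2 d_{G-e}(s,t)$, which is exactly the upper bound in condition (i).

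**What I expect to be delicate.**

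The only subtle points are bookkeeping ones: confirming that $\pi_G(v,t)$ can be taken to be the tree path $\pi_T(v,t)$ (true because $T$ is an SPT and $t \in T_v$, so $d_G(v,t) = d_T(v,t)$), and confirming that this tree path avoids the failed edge $e=(u,v)$ (true since $e$ connects $v$ to its parent $u$, while $\pi_T(v,t)$ stays within $T_v$). These justify both that $P_e(t)$ is a valid path in $G-e$ and that $d_{G-e}(v,t) \le d_G(v,t)$. I don't foresee a genuine obstacle; the lemma is a short triangle-inequality argument, and the case split on whether $d_{G-e}(s,t) \ge 2 d_G(s,t)$ is what makes it go through. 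The one thing to double-check is the direction of the inequality in (ii) — it is strict ($<$), but since we only use it in the contrapositive form $d_{G-e}(s,t) \ge 2 d_G(s,t)$ when (ii) fails, the non-strict bound there is exactly what the computation above needs, so everything is consistent.
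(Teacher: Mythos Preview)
Your proof is correct and follows essentially the same approach as the paper's: assume condition (ii) fails, then bound $w(P_e(t)) = d_{G-e}(s,v) + d_G(v,t)$ via the triangle inequality $d_{G-e}(s,v) \le d_{G-e}(s,t) + d_{G-e}(v,t) = d_{G-e}(s,t) + d_G(v,t)$ and the ancestor inequality $d_G(v,t) \le d_G(s,t)$. Your additional justifications (that $\pi_T(v,t)$ avoids $e$, that $P_e(t)$ lies in $G-e$, and the handling of strict versus non-strict inequality) are all sound and merely spell out steps the paper leaves implicit.
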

\begin{proof}
	We assume that (ii) is false (i.e., $d_{G-e}(s,t) \ge 2 d_G(s,t)$) and we prove that (i) must hold. Indeed:
	\begin{multline*}	
		d_{G-e}(s,t) \le w(P_e(t)) = d_{G-e}(s,v) + d_G(v,t) \le d_{G-e}(s,t) + d_{G-e}(v, t) + d_G(v,t) \\
		 = d_{G-e}(s,t) + 2 d_G(v,t) \le d_{G-e}(s,t) + 2 d_G(s,t) \le 2 d_{G-e}(s,t).
	\end{multline*}
\end{proof}

Notice that the length of $P_e(t)$ is available in constant time once we store $O(n)$ distance values, namely $d_{G-e}(s,v)$ for each $e=(u,v) \in E(T)$. Hence, the challenge here is to understand when $w(P_e(t))$ provides a 2-approximation of the distance $d_{G-e}(s,t)$ and when we can instead return the value $2 d_{G}(s,t) \le 2 d_{G-e}(s,t)$ (observe that $2 d_{G}(s,t)$ could be in general smaller than $d_{G-e}(s,t)$). The idea of our oracle is that of selecting a subset of \emph{marked} vertices for which this information can be stored and retrieved efficiently and from which we can derive the same information for the other nodes.

To this aim, we now describe an algorithm that preprocesses the graph and collects compact information that we  will use later to efficiently answer distance queries. Consider the edges of $T$ as traversed by a preorder visit from $s$. We define a total order relation $\prec$ on $E(T)$ as follows: we say that $e' \prec e''$ iff $e'$ is traversed before $e''$. We also use $e' \preceq e''$ to denote that either $e' \prec e''$ or $e' = e''$.

\begin{algorithm}[t]
	\caption{Mark-up algorithm}	
	\label{alg:mark}
	
	\DontPrintSemicolon
	\For{$v \in V$}
	{
		$\vmark(v) \gets \infty$ \;
	}
	
	\BlankLine
	
	\For{$e=(u,v) \in E(T)$ in preorder w.r.t. $T$}
	{
		\For{$t \in V(T_v)$ in preorder w.r.t. $T$}
		{
			\If(\tcp*[f]{Distance test}){$w(P_e(t)) \le 2 d_{G-e}(s,t)$}
			{
				do nothing \;
			}
			\ElseIf(\tcp*[f]{Ancestor test}){$\exists z \in A(t,e) \, : \, \vmark(z) \neq \infty$}
			{
				do nothing \;
			}
			\Else(\tcp*[f]{Both tests failed})
			{
				$\vmark(t) \gets e$ \tcp*{Mark $t$ at time $e$}
			}
		}
	}
\end{algorithm}

Algorithm~\ref{alg:mark} considers the failing edges $e \in E(T)$ in preorder and computes a label $\vmark(v)$ for each vertex $v \in V(G)$. This value will be either $\infty$ or a suitable edge $e \in E(T)$. Here we treat $\infty$ as a special label that satisfies $e' \prec \infty$ for every edge $e' \in E(T)$.
We say that $v$ is \emph{marked} if $\vmark(v) \neq \infty$, and we say that $v$ is \emph{marked at time $e$} if $\vmark(v) \preceq e$. Intuitively, $\vmark(v)$ is the time at which $v$ first becomes marked.

More precisely, for each failing edge $e$, Algorithm~\ref{alg:mark}, marks a vertex $t \in V(T_v)$ (at time $e$) iff vertex $t$ fails two tests: the \emph{distance test} and the  \emph{ancestor test}. In the distance test we check whether the path $P_e(t)$ suffices to provide a $2$-stretched distance to $t$, while in the ancestor test we check whether a living ancestor of $t$ has already been marked. Notice that the ancestor test guarantees that each vertex $t$ is marked at most once during the whole execution of the algorithm (since $t \in A(t,e)$ by definition).

As a simple consequence of the above algorithm, we have:

\begin{lemma}
	\label{lemma:marked_bound}
	Let $e \in E(T)$ be a failing edge and let $t$ be a vertex such that $\vmark(t)=e$, we have $d_{G-e}(s,t) < 2 d_G(s,t)$.
\end{lemma}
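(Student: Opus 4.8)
The plan is to read the claim off directly from the control flow of Algorithm~\ref{alg:mark} combined with Lemma~\ref{lemma:2_apx_conditions}, so essentially no new work is needed. First I would unwind the hypothesis $\vmark(t)=e$: the label of a vertex is set only inside the inner loop, and $\vmark(t)$ can be assigned the value $e$ only during the iteration in which the failing edge is $e$ and the inner loop reaches vertex $t$ (note that this forces $t\in V(T_v)$, with $e=(u,v)$, which is exactly the hypothesis of Lemma~\ref{lemma:2_apx_conditions}). Moreover, for this assignment to happen the algorithm must have entered the last branch (``Both tests failed''); in particular the \emph{distance test} must have failed, which is precisely the inequality $w(P_e(t)) > 2\,d_{G-e}(s,t)$.

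Next I would invoke Lemma~\ref{lemma:2_apx_conditions} on the pair $(e,t)$: at least one of conditions (i) and (ii) holds. Condition (i) asserts, among other things, that $w(P_e(t)) \le 2\,d_{G-e}(s,t)$, which is flatly contradicted by the strict inequality just obtained from the failed distance test. Hence (i) cannot hold, so (ii) must hold, i.e.\ $d_{G-e}(s,t) < 2\,d_G(s,t)$, which is exactly the assertion of the lemma.

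The argument is immediate once the bookkeeping of the algorithm is made explicit; the only point deserving a moment of care is checking that ``the distance test failed'' produces the inequality $w(P_e(t)) > 2\,d_{G-e}(s,t)$ with the right direction and with strict sign, so that it genuinely excludes clause (i) of Lemma~\ref{lemma:2_apx_conditions} (whose upper bound $w(P_e(t)) \le 2\,d_{G-e}(s,t)$ is non-strict). Beyond this, no structural property of $G$, of $T$, or of the replacement paths is required, since all of that has already been packaged into Lemma~\ref{lemma:2_apx_conditions}. I do not anticipate any real obstacle here; the lemma is best presented as a one-line corollary of the preceding two results.
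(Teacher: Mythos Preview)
Your proposal is correct and matches the paper's own proof essentially verbatim: the paper also argues that $\vmark(t)=e$ forces the distance test to have failed, giving $w(P_e(t)) > 2\,d_{G-e}(s,t)$, which rules out clause~(i) of Lemma~\ref{lemma:2_apx_conditions} and therefore forces clause~(ii). There is nothing to add.
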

\begin{proof}
	Since $t$ is first marked at time $e$, it must have failed the distance test, i.e., $w(P_e(t)) > 2 d_{G-e}(s,t)$.
	This means that condition (i) of Lemma~\ref{lemma:2_apx_conditions} is false and hence condition (ii) must hold.
\end{proof}

Another useful property of the marked vertices is the following:

\begin{lemma}
\label{lemma:safe_path}
Let $e \in E(T)$ be a failing edge and let $t$ be a vertex such that $\vmark(t)= e$, then $\pi_{G-e}(s,t)$ and $\pi_{T}(v,t)$ are edge disjoint.
\end{lemma}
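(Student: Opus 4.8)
The plan is to argue by contradiction. Suppose $\pi_{G-e}(s,t)$ and $\pi_T(v,t)$ share an edge, and write $\pi_T(v,t)=(v=z_0,z_1,\dots,z_k=t)$ with $z_i$ the parent of $z_{i+1}$ in $T$; note that $A(t,e)=\{z_0,\dots,z_k\}$. Since $\vmark(t)=e$, when Algorithm~\ref{alg:mark} processes $t$ at time $e$ it fails \emph{both} tests. Failing the distance test gives $d_{G-e}(s,v)+d_G(v,t)=w(P_e(t))>2\,d_{G-e}(s,t)$. Failing the ancestor test means no vertex of $A(t,e)$ is marked at that instant; since $z_0,\dots,z_{k-1}$ are all visited before $t$ in the inner preorder loop at time $e$, this says precisely that none of $z_0,\dots,z_{k-1}$ is marked at any time $\preceq e$.

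First I would dispose of the degenerate case $k=1$, i.e.\ $t$ is a child of $v$. Then the only possible shared edge is $(v,t)$, and because it is incident to the endpoint $t$ it must be the last edge of $\pi_{G-e}(s,t)$; hence $d_{G-e}(s,t)=d_{G-e}(s,v)+w(v,t)\ge d_{G-e}(s,v)+d_G(v,t)=w(P_e(t))$, so $t$ would have \emph{passed} the distance test — a contradiction.

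For $k\ge 2$ I would single out an \emph{internal} vertex $z=z_j$ with $1\le j\le k-1$ that is an endpoint of a shared edge (if the shared edge is $(z_i,z_{i+1})$, take $z=z_i$ when $i\ge 1$ and $z=z_1$ when $i=0$). Such a $z$ is a strict descendant of $v$ and a strict ancestor of $t$, so it is visited at time $e$ strictly before $t$, and it lies among $z_1,\dots,z_{k-1}$, hence is unmarked at every time $\preceq e$. Repeating for $z$ the dichotomy reasoning used for $t$: its ancestor test at time $e$ must fail (its living ancestors are exactly $z_0,\dots,z_j$, none marked at that instant), so its distance test must \emph{pass}, i.e.\ $d_{G-e}(s,v)+d_G(v,z)=w(P_e(z))\le 2\,d_{G-e}(s,z)$. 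Combining this with the failed distance test of $t$ and with the identity $d_{G-e}(s,t)=d_{G-e}(s,z)+d_{G-e}(z,t)$ — valid because $z$ lies on the shortest path $\pi_{G-e}(s,t)$ — and cancelling $d_{G-e}(s,v)$, I get $d_G(v,t)-d_G(v,z)>2\,d_{G-e}(z,t)\ge 0$. But the triangle inequality in $G$ gives $d_G(v,t)-d_G(v,z)\le d_G(z,t)\le d_{G-e}(z,t)$, so $2\bigl(d_G(v,t)-d_G(v,z)\bigr)\le 2\,d_{G-e}(z,t)<d_G(v,t)-d_G(v,z)$, which is impossible for the positive quantity $d_G(v,t)-d_G(v,z)$.

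The one point that needs care — the main obstacle — is guaranteeing that the shared-edge endpoint one works with is a genuine \emph{internal} vertex of $\pi_T(v,t)$: it must be a strict descendant of $v$ so that it is actually considered in the inner loop at time $e$, and a strict ancestor of $t$ so that the failed ancestor test of $t$ certifies it is unmarked at time $\preceq e$. This is exactly what forces the separate treatment of $k=1$ and the small shift from $z_0=v$ to $z_1$ when the shared edge touches $v$; everything else is just two applications of a distance-test inequality and one triangle inequality.
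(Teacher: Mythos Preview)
Your proof is correct and follows essentially the same contradiction strategy as the paper: pick a vertex $z$ on both $\pi_{G-e}(s,t)$ and $\pi_T(v,t)$, observe that $z$ must have passed the distance test at time $e$, and derive that $t$ would then pass its own distance test. The paper, however, simply takes $z$ to be the endpoint of the shared edge closer to $v$ (possibly $z=v$) and concludes directly via $w(P_e(t))=w(P_e(z))+d_G(z,t)\le 2d_{G-e}(s,z)+2d_{G-e}(z,t)=2d_{G-e}(s,t)$, without any case split or subtraction argument.

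Your ``main obstacle'' is in fact a non-issue: the inner loop of Algorithm~\ref{alg:mark} ranges over all of $V(T_v)$, which includes $v$ itself, and $v$ \emph{always} passes the distance test since $w(P_e(v))=d_{G-e}(s,v)\le 2d_{G-e}(s,v)$. Hence there is no need to insist that $z$ be a strict descendant of $v$, and both the $k=1$ case and the shift from $z_0$ to $z_1$ can be dropped.
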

\begin{proof}
	Let $e=(u,v)$ and assume by contradiction that $\pi_{G-e}(s,t)$ and $\pi_{T}(v,t)$ are not edge disjoint. Let $(z,z')$ be an edge belonging to both paths, with $z$ closer to $v$ than $z'$. Notice that both $z$ and $z'$ are living ancestors of $t$, and that $z \neq t$.
	
	Since $t$ is first marked at time $e$, it must have failed the ancestor test. This implies that no other living ancestor of $t$ is marked at time $e$. Moreover, as $z$ is visited by the algorithm before $t$, it must have failed the ancestor test as well.
	 Since $z$ it is not marked at time $e$, it follows that it must have passed the distance test, i.e., $w(P_e(z)) \le 2 d_{G-e}(s,z)$. We have $P_e(t) = P_e(z) \circ \pi_G(z,t)$ and hence:
	\begin{multline*}	
		w( P_e(t) )  = w(P_e(z)) + d_G(z,t) \le 2 d_{G-e}(s,z) + d_G(z,t) \\
		  \le 2 d_{G-e}(s,z) + 2 d_{G-e}(z,t) = 2 d_{G-e}(s,t)
	\end{multline*}
	\noindent which implies that $t$ has passed the distance test and contradicts the hypothesis $\vmark(t)=e$.
\end{proof}

The next lemma is the last ingredient of our oracle, and allows to distinguish the two cases of Lemma~\ref{lemma:2_apx_conditions}.

\begin{figure}[t]
	\centering
	\includegraphics[scale=1]{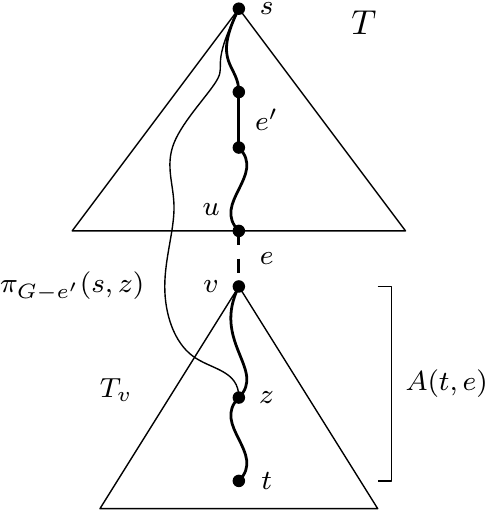}
	\caption{Representation of the proof of Lemma~\ref{lemma:ancestor_bound}. The shortest path between $s$ and $t$ in $T$ is shown in bold while the failing edge $e$ is dashed. Notice that the path $\pi_{G-e'}(s,z)$ is edge disjoint from the path $\pi_T(v,z)$.}
	\label{fig:ancestor}
\end{figure}

\begin{lemma}
	\label{lemma:ancestor_bound}
	Let $e = (u,v) \in E(T)$ be a failing edge and let $t \in V(T_v)$. If there exists $z \in A(t,e)$ such that  $\vmark(z) \preceq e$, then 	$d_{G-e}(s,t) \le 2 d_G(s,t)$. If no such vertex $z$ exists, then $d_{G-e}(s,t) \le w(P_e(t)) \le 2 d_{G-e}(s,t)$.
\end{lemma}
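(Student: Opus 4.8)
The plan is to split into the two cases of the statement and reduce each to the lemmas already established. For the first case, suppose there exists $z \in A(t,e)$ with $\vmark(z) \preceq e$. Among all such marked living ancestors, I would pick the one $z$ that is marked earliest, say $\vmark(z) = e'$ with $e' \preceq e$ (in fact, by the ancestor test, there is at most one marked vertex in $A(t,e)$, but picking the earliest one is the robust way to phrase it). By Lemma~\ref{lemma:marked_bound} applied to $z$ and $e'$, we get $d_{G-e'}(s,z) < 2 d_G(s,z)$. Now I need to transfer this from $G-e'$ to $G-e$ and from $z$ to $t$. Since $z$ is a living ancestor of $t$ with respect to the failure of $e=(u,v)$, the tree path $\pi_T(z,t)$ survives in $G-e$, so $d_{G-e}(s,t) \le d_{G-e}(s,z) + d_G(z,t)$. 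The key observation — and this is where Lemma~\ref{lemma:safe_path} and the figure come in — is that $e'=(u',v')$ lies on the tree path from $s$ to $z$, so removing it does not touch $\pi_T(v',z)$; combining this with the fact that $\pi_{G-e'}(s,z)$ is edge-disjoint from $\pi_T(v',z)$, the concatenation $\pi_{G-e'}(s,z) \circ \pi_T(v',z) \circ \dots$ gives a walk from $s$ to $z$ in $G-e$ of length... — more carefully, since $e' \preceq e$ and $e'$ is an ancestor edge of $z$ while $e$ need not be, I would argue that a walk avoiding both $e'$ and $e$ exists by gluing the detour $\pi_{G-e'}(s,z)$ (which avoids $e'$) with tree edges, using edge-disjointness from $\pi_T(v',z)$ to ensure the tree portion is intact, hence $d_{G-e}(s,z) \le d_{G-e'}(s,z)$. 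Then $d_{G-e}(s,t) \le d_{G-e'}(s,z) + d_G(z,t) < 2 d_G(s,z) + d_G(z,t) \le 2 d_G(s,z) + 2 d_G(z,t) = 2 d_G(s,t)$, where the last equality uses that $z$ is on the shortest $s$–$t$ path in $T$.

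For the second case, suppose no $z \in A(t,e)$ is marked at time $e$. Then in particular $t$ itself is not marked at time $e$, so when Algorithm~\ref{alg:mark} processed the pair $(e,t)$, either the distance test passed or the ancestor test passed. The ancestor test passing would mean some living ancestor of $t$ was already marked at time $e$, contradicting our hypothesis; so the distance test must have passed, i.e., $w(P_e(t)) \le 2 d_{G-e}(s,t)$. Combined with the trivial lower bound $d_{G-e}(s,t) \le w(P_e(t))$ (since $P_e(t)$ is a path from $s$ to $t$ in $G-e$), this gives $d_{G-e}(s,t) \le w(P_e(t)) \le 2 d_{G-e}(s,t)$, as required.

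The main obstacle is the transfer argument in the first case: showing that the detour $\pi_{G-e'}(s,z)$ computed for the earlier failure $e'$ is still a valid path in $G-e$, i.e., that it avoids $e$ as well. This is exactly what Lemma~\ref{lemma:safe_path} is engineered to handle — it guarantees $\pi_{G-e'}(s,z)$ is edge-disjoint from $\pi_T(v',z)$, and since $e \preceq$-dominates $e'$ in the preorder and $z$ remained a living ancestor of $t$ under $e$, the edge $e$ does not lie on $\pi_{G-e'}(s,z)$ (it either lies on the tree path below $z$, which $\pi_{G-e'}(s,z)$ avoids by disjointness, or outside the relevant subtree entirely). Once that is nailed down, everything else is the short chain of inequalities above, and the lemma then immediately yields the oracle: store $2 d_G(s,t)$-type information for marked vertices and $w(P_e(t))$ otherwise, deciding between them by a bottleneck-ancestor query for a marked $z \in A(t,e)$.
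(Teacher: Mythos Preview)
Your approach is the same as the paper's: invoke Lemma~\ref{lemma:marked_bound} for the bound $d_{G-e'}(s,z) < 2 d_G(s,z)$, use Lemma~\ref{lemma:safe_path} to argue that $\pi_{G-e'}(s,z)$ also avoids $e$ (so $d_{G-e}(s,z) \le d_{G-e'}(s,z)$), and then chain the inequalities; the second case is handled identically.

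That said, your justification of the transfer step is muddled and contains wrong statements. You write that ``$e'$ is an ancestor edge of $z$ while $e$ need not be'' --- this is false: $z \in A(t,e) \subseteq V(T_v)$, so $e=(u,v)$ certainly lies on $\pi_T(s,z)$. Likewise, ``it either lies on the tree path below $z$ \ldots'' gets the geometry backwards: $e$ is \emph{above} $z$, not below. The clean observation (which the paper uses) is that both $e'=(u',v')$ and $e$ lie on $\pi_T(s,z)$, and since $e' \preceq e$ along a common root-to-leaf path, $e'$ is the higher of the two; hence $e$ lies on $\pi_T(v',z)$, and Lemma~\ref{lemma:safe_path} immediately gives $e \notin \pi_{G-e'}(s,z)$. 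Once you state this precisely, the rest of your argument goes through verbatim.
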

\begin{proof}
	Let $z$ be any vertex in $A(t,e)$ such that $\vmark(z) \preceq e$, and let $e'=\vmark(z)$. By the definition of living ancestor and by Lemma~\ref{lemma:safe_path} we have that $\pi_{G-e'}(s,z)$ does not use the edge $e$ (see Figure~\ref{fig:ancestor}). Since $z$ is marked at time $e'$ we have $d_{G-e'}(s,z)<2 d_G(s,z)$ (see Lemma~\ref{lemma:marked_bound}). Thus, we have that $d_{G-e}(s,z) \le w(\pi_{G-e'}(s,z))=d_{G-e'}(s,z)<2 d_G(s,z)$. Therefore:
	\begin{align*}	
		d_{G-e}(s,t) & \le d_{G-e}(s,z) + d_G(z,t) \le 2 d_G(s,z)  + d_G(z,t) \\
		& \le 2 d_G(s,z) + 2 d_G(z,t) = 2 d_G(s,t).
	\end{align*}
	
	If no such vertex $z$ exists, then when Algorithm~\ref{alg:mark} considered edge $e$, the vertex $z$ failed the ancestor test.
	Since $t$ is not marked at time $e$ (as otherwise we could choose $z=t$) it must have passed the distance test, i.e., $w(P_e(t)) \le 2 d_{G-e}(s,t)$.
\end{proof}

This latter lemma is exactly what we need in order to implement the query operation of our oracle. When edge $e=(u,v)$ is failing and we are queried for the distance of a vertex $t$, we first test whether $e \in E(T)$ and $t \in V(T_v)$: if the test fails we return the original distance $d_G(s,t)$.\footnote{To see whether $t$ is contained in $V(T_v)$ or not, it suffices to check whether the least common ancestor of $t$ and $v$ in $T$ corresponds to $v$ or not. The least common ancestor between any pair of vertices of a tree can be computed in constant time after a linear time preprocessing \cite{HarelT84}.} If the test succeeds, we look for a vertex $z \in A(t,e)$ such that $\vmark(z) \preceq e$. If such a vertex exists we return $2 d_G(s,t)$, otherwise we return $w(P_e(t))$. Observe that in both cases we return a feasible 2-approximation of the distance $d_{G-e}(s,t)$.

In the following we will show how it is possible to determine in constant time whether such a vertex $z$ exists.
More precisely we only need to look for a vertex $x \in A(t, e)$ minimizing $\vmark(x)$.
If such a vertex satisfies $\vmark(x) \preceq e$ then $z=x$ and we are done. On the converse, if $e \prec \vmark(x)$, then we know that no vertex $z \in A(t, e)$ with $\vmark(z) \preceq e$ can exist.

To this aim, we use a data structure for the \emph{bottleneck vertex query} problem on trees (\texttt{BVQ} for short). In the \texttt{BVQ} problem we want to preprocess a vertex-weighted tree $\mathcal{T}$ in order to answer queries of this form: given two vertices $x,y \in V(\mathcal{T})$ report the lightest vertex on the (unique) path between $x$ and $y$ in $\mathcal{T}$. In \cite{DemaineLW14}, the authors show how to build, in $O(|V(\mathcal{T})| \log |V(\mathcal{T})|)$ time, a data structure having linear size and constant query time.\footnote{Actually, in \cite{DemaineLW14} the \emph{bottleneck edge query} (\texttt{BEQ}) problem is considered instead. However it is easy to see that the \texttt{BEQ} and the  problems \texttt{BVQ} are essentially equivalent.}

In our preprocessing, we build such a structure on the tree $T$ where each vertex $x \in T$ weighs $\vmark(x)$, and then we use it to locate $x$ in the path between $v$ and $t$ whenever we need to report an approximate distance for $d_{G-(u,v)}(s,t)$.

We are now ready to state the main result of this section.

\begin{theorem}
Let $G$ be a non-negatively real weighted and undirected $n$-vertex and $m$-edge graph, and let $s$ be a source node.
There exists an EFT $2$-SSDO that has size $O(n)$ and constant query time, and that can be constructed using $O(mn+n^2 \log n)$ time and $O(m)$ space.
\end{theorem}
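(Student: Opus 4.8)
The plan is to assemble the oracle from the ingredients already established, checking correctness, size, query time, and preprocessing cost in turn.

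\textbf{Correctness and query time.} The oracle stores: (1) the SPT $T$ with its edge weights (so that $d_G(s,t)$ is available in $O(1)$ time for every $t$, via the distance label of $t$ in $T$); (2) an LCA data structure on $T$ \cite{HarelT84}, so that given a failing edge $e=(u,v)\in E(T)$ and a query vertex $t$ we can decide in $O(1)$ time whether $t\in V(T_v)$; (3) for every tree edge $e=(u,v)\in E(T)$, the single value $d_{G-e}(s,v)$, which together with the $T$-distances lets us evaluate $w(P_e(t)) = d_{G-e}(s,v) + \bigl(d_G(s,t)-d_G(s,v)\bigr)$ in $O(1)$ time; (4) the values $\vmark(x)$ produced by Algorithm~\ref{alg:mark}, encoded as positions in the preorder traversal of $E(T)$ (with $\infty$ a sentinel larger than all of them); and (5) the \texttt{BVQ} structure of \cite{DemaineLW14} built on $T$ with vertex weights $\vmark(\cdot)$. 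On a query $(e,t)$: if $e\notin E(T)$ or $t\notin V(T_v)$ we return $d_G(s,t)=d_{G-e}(s,t)$; otherwise we issue a \texttt{BVQ} for the path between $v$ and $t$ in $T$ to obtain the vertex $x\in A(t,e)$ minimizing $\vmark(x)$. If $\vmark(x)\preceq e$ we return $2d_G(s,t)$; else we return $w(P_e(t))$. By Lemma~\ref{lemma:ancestor_bound}, in the first sub-case there is indeed some $z\in A(t,e)$ with $\vmark(z)\preceq e$ (namely $z=x$), so $d_{G-e}(s,t)\le 2d_G(s,t)\le 2d_{G-e}(s,t)$; in the second sub-case no such $z$ exists, so $d_{G-e}(s,t)\le w(P_e(t))\le 2d_{G-e}(s,t)$. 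Either way the returned value lies in $[d_{G-e}(s,t),2d_{G-e}(s,t)]$, and every step is $O(1)$.

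\textbf{Size.} Items (1)--(4) are $O(n)$ words ($T$ has $n-1$ edges, one stored distance per tree edge, one $\vmark$ value per vertex, $O(n)$ for LCA), and the \texttt{BVQ} structure of \cite{DemaineLW14} has linear size; hence $O(n)$ total.

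\textbf{Preprocessing time and space.} Compute $T$ and all single-source distances $d_G(s,\cdot)$ with one SSSP run ($O(m+n\log n)$). For each of the $n-1$ tree edges $e$, run a single-source shortest-path computation in $G-e$ from $s$ to obtain $d_{G-e}(s,v)$ (and, inside Algorithm~\ref{alg:mark}, the values $d_{G-e}(s,t)$ needed for the distance test); this is $O(m+n\log n)$ per edge, so $O(mn+n^2\log n)$ over all edges, which dominates. The bookkeeping of Algorithm~\ref{alg:mark} — iterating $t$ over $V(T_v)$ in preorder, evaluating the distance test in $O(1)$ using the precomputed $d_{G-e}(s,\cdot)$ values, and the ancestor test by maintaining along the current root-to-$t$ path a flag recording whether any living ancestor is already marked — costs $O(n)$ per failing edge, hence $O(n^2)$ overall, again within the budget. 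Building the \texttt{BVQ} structure on $T$ takes $O(n\log n)$ \cite{DemaineLW14}. For the space bound during construction: we never need to keep all $n$ shortest-path trees of the $G-e$'s simultaneously — we process the failing edges one at a time, reusing $O(m)$ working space for each SSSP computation and retaining only the $O(n)$ output of Algorithm~\ref{alg:mark}; with the input graph itself taking $O(m)$, the construction uses $O(m)$ space.

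\textbf{Main obstacle.} The only genuinely delicate point is confirming that the $O(1)$-time evaluation of $w(P_e(t))$ and of the ancestor/distance predicates is legitimate, i.e.\ that all the quantities these rely on are indeed precomputed and stored within the $O(n)$-word budget, and that the reduction to \texttt{BVQ} faithfully captures "does there exist $z\in A(t,e)$ with $\vmark(z)\preceq e$" — this is exactly the equivalence "$\min_{x\in A(t,e)}\vmark(x)\preceq e$ iff such a $z$ exists", which holds because $\preceq$ is a total order on $E(T)\cup\{\infty\}$. Everything else is a routine assembly of Lemmas~\ref{lemma:2_apx_conditions}--\ref{lemma:ancestor_bound} with standard shortest-path and tree data-structure machinery.
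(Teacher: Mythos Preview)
Your proposal is correct and follows essentially the same approach as the paper's own proof: store $T$, the values $d_{G-e}(s,v)$, the labels $\vmark(\cdot)$, and a \texttt{BVQ} structure, answer queries via Lemma~\ref{lemma:ancestor_bound}, and bound the preprocessing by one Dijkstra per tree edge. The only cosmetic difference is in the ancestor-test bookkeeping during Algorithm~\ref{alg:mark}: the paper keeps, for each vertex $x$, a running count $\nu_x$ of marked ancestors and tests $\nu_t-\nu_u>0$, whereas you maintain a single flag/counter along the current DFS path in $T_v$; both implementations are $O(1)$ per step and yield the same result.
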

\begin{proof}
As we already discussed it is easy to answer a query in constant time once we store: (i) the SPT $T$ of $G$ w.r.t. $s$, (ii) the label $\vmark(v)$ for each $v$, (iii) the value $w(\pi_{G-e}(s,v))$ for each $(u,v) \in E(T)$, and (iv) a data structure for the \texttt{BVQ} problem. The total space used is hence $O(n)$.

Concerning the time and the space used by Algorithm \ref{alg:mark}, observe that for each edge $e=(u,v)$, we can compute an SPT of $G-e$ with source $s$ in $O(m + n \log n)$ time and $O(m)$ space. Therefore, for each $t$ the distance test can be accomplished in $O(1)$ time. It remains to show that also the ancestor test can be done in constant time. To this aim, it is sufficient to maintain for each vertex $x$ the (current) number $\nu_x$ of marked ancestors of $x$ in $T$, and check whether $\nu_t - \nu_u >0$. The maintenance  of these values can be clearly done with constant time and space overhead, from which the claim follows.
\end{proof}

\section{The EFT \texorpdfstring{$(1+\epsilon)$-SSDO}{(1+epsilon)-SSDO}}

In this section we describe our main result, namely how to build, given any $0<\epsilon<1$, an EFT $(1+\epsilon)$-SSDO having $O\left(n\cdot \frac{1}{\epsilon} \log \frac{1}{\epsilon}\right)$ size and $O\left(\log n\cdot \frac{1}{\epsilon} \log \frac{1}{\epsilon}\right)$ query time.

Our distance oracle stores a set of $O\left(n\cdot \frac{1}{\epsilon} \log \frac{1}{\epsilon}\right)$ (exact) distance values that are computed by a preprocessing algorithm that we describe below. From a high-level point of view, we follow the same approach used in the previous section, but here a vertex $t$ can be marked several times, each corresponding to a specific failing edge $e=(u,v) \in E(T)$ for which the algorithm computes the shortest path $\pi_{G-e}(s,t)$ that is edge disjoint from $\pi_{T}(v,t)$. We will show that such paths have strictly decreasing lengths and that they are $O\left(\frac{1}{\epsilon} \log \frac{1}{\epsilon}\right)$ in number. We will store all these distance values and we will show that they can be used to efficiently answer any distance query by suitably combining them with distances in $T$.

More precisely, for every $e=(u,v) \in E(T)$ and every $t \in V(T_v)$, the preprocessing algorithm computes a value ${\tt dist}(t,e)$ that satisfies $d_{G-e}(s,t)\leq {\tt dist}(t,e)\leq \sqrt{1+\epsilon}\cdot d_{G-e}(s,t)$. Furthermore, each value ${\tt dist}(t,e)$ represents the total length of a path $P$ from $s$ to $t$ in $G-e$, whose structure can be either of the following two types:
\begin{description}
\item [type 1:] $P=\pi_{G-e}(s,t)$;
\item [type 2:] $P$ can be decomposed into $\pi_{G-e'}(s,z)$, for some $e'$ and $z$ such that ${\tt dist}(z,e')=d_{G-e'}(s,z)$, and $\pi_T(z,t)$ (possibly, either $e=e'$ or $z=t$).
\end{description}

Since each path of type 2 can be easily derived by combining a path of type 1 with a path in $T$, our oracle stores only all the values ${\tt dist}(t,e)=d_{G-e}(s,t)$ that represent paths of type 1. In the next two subsections, we will show that, for every $e \in E(T)$ and every $t$, our distance oracle can compute a $(\sqrt{1+\epsilon})$-approximation of ${\tt dist}(t,e)$ in $O\left(\log n\cdot \frac{1}{\epsilon} \log \frac{1}{\epsilon}\right)$ time.

\subsection{The preprocessing algorithm}

The preprocessing algorithm (see the pseudocode of Algorithm \ref{alg:preprocessing}) visits all the edges of $T$ in preorder and, for each $e=(u,v) \in E(T)$, it visits all the vertices of $T_v$ in preorder. For the rest of this section, unless stated otherwise, let $e=(u,v)$ be a fixed edge of $T$ that is visited by the algorithm. The algorithm sets ${\tt dist}(v,e)=d_{G-e}(s,v)$, i.e., ${\tt dist}(v,e)$ always represents a path of type 1. When the algorithm visits $t$, with $t\neq v$, it first checks whether the shortest, among several paths from $s$ to $t$ in $G-e$ of type 2, has a total length of at most $\sqrt{1+\epsilon}\cdot d_{G-e}(s,t)$. If this is the case, then the algorithm sets ${\tt dist}(t,e)$ equal to the total length of such a path, otherwise it sets ${\tt dist}(t,e)=d_{G-e}(s,t)$, i.e., ${\tt dist}(t,e)$ represents a path of type 1. The preprocessing algorithm returns the set of all distance values that represent the paths of type 1.

For each vertex $t$, the algorithm stores the total length of the last path from $s$ to $t$ of type 1 that has computed in the variable ${\tt last}(t)$.

\begin{algorithm}
	\DontPrintSemicolon
	\SetKwInOut{Input}{Input}
	\SetKwInOut{Output}{Output}
	\SetKw{Break}{break}	

	\caption{Selects paths of type 1 whose lengths are stored in the oracle.}
		\label{alg:preprocessing}
	
	\tcp{Initialization of variables}
	$S,S'=\emptyset$
	\For{every $t \in V(G)$}
	{
		${\tt last}(t)=\infty$\;
	}

	\BlankLine

	\tcp{All the values ${\tt dist}(t,e)$ are computed}
	\For{every $e=(u,v) \in E(T)$ in preorder w.r.t. $T$}
	{
		${\tt last}(v),{\tt dist}(v,e)=d_{G-e}(s,v)$; add $d_{G-e}(s,v)$ to $S'$ \tcp*{path of type 1}
		\For{every $t \in V(T_v)\setminus\{v\}$ in preorder w.r.t. $T$}
		{
			\tcp{The length of a path from $s$ to $t$ in $G-e$ of type 2 is computed}
			${\tt dist}(t,e)=\min\big\{{\tt last}(z)+d_T(z,t) \mid z \in A(t,e)\big\}$\;
				\If{${\tt dist}(t,e)>\sqrt{1+\epsilon}\cdot d_{G-e}(s,t)$}
				{
					${\tt last}(t),{\tt dist}(t,e)=d_{G-e}(s,t);$ add $d_{G-e}(s,t)$ to $S$ \tcp*{path of type 1}
				}
		}
	}

	\BlankLine
	
	\Return $S$ and $S'$.
\end{algorithm}

For the rest of this section, unless stated otherwise, let $t$ be a fixed vertex of $T_v$ that is visited by the algorithm. The proof of the following proposition is trivial.

\begin{proposition}\label{prop:feasibility}
At the end of the visit of $t$, ${\tt dist}(t,e)\leq \sqrt{1+\epsilon}\cdot d_{G-e}(s,t)$.
\end{proposition}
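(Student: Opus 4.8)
The plan is to argue by a direct case analysis on which branch Algorithm~\ref{alg:preprocessing} takes during the visit of $t$, using nothing more than $\sqrt{1+\epsilon}\ge 1$.

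If $t=v$, the algorithm sets ${\tt dist}(v,e)=d_{G-e}(s,v)\le\sqrt{1+\epsilon}\cdot d_{G-e}(s,v)$ directly, and we are done. So assume $t\neq v$. The algorithm first assigns ${\tt dist}(t,e)=\min\{{\tt last}(z)+d_T(z,t)\mid z\in A(t,e)\}$; this minimum is over a non-empty set since $t\in A(t,e)$, though it may evaluate to $+\infty$ if all the relevant ${\tt last}(\cdot)$ values are still $\infty$. Now there are two cases according to the outcome of the test ${\tt dist}(t,e)>\sqrt{1+\epsilon}\cdot d_{G-e}(s,t)$.

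If the test succeeds, the algorithm overwrites ${\tt dist}(t,e)$ with $d_{G-e}(s,t)$, and since $\sqrt{1+\epsilon}\ge 1$ we immediately get ${\tt dist}(t,e)=d_{G-e}(s,t)\le\sqrt{1+\epsilon}\cdot d_{G-e}(s,t)$. If the test fails, then ${\tt dist}(t,e)$ retains the value computed as the above minimum, and the negation of the test condition is precisely the inequality ${\tt dist}(t,e)\le\sqrt{1+\epsilon}\cdot d_{G-e}(s,t)$. In both cases the claimed bound holds at the end of the visit of $t$, as required (and, incidentally, it is preserved afterwards, since the entry ${\tt dist}(t,e)$ for this particular pair is not modified again by the algorithm).

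There is essentially no obstacle here; the proof is a one-line observation per branch. The only point worth flagging is that the initial assignment to ${\tt dist}(t,e)$ may be $+\infty$, but this is harmless: in that situation the test condition is necessarily satisfied, so the value is reset to the exact distance $d_{G-e}(s,t)$, which trivially satisfies the bound.
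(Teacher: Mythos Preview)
Your proof is correct and matches the paper's treatment: the paper explicitly declares the proposition trivial and offers no further argument, so your direct case split on the algorithm's branches is exactly the intended one-line justification.
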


\noindent The following lemma is similar to Lemma \ref{lemma:safe_path} and it is useful to prove that ${\tt dist}(t,e)\geq d_{G-e}(s,t)$.

\begin{lemma}\label{lm:disjoint_paths}
If $d_{G-e}(s,t)$ is added to $S\cup S'$, then $\pi_{G-e}(s,t)$ and $\pi_T(v,t)$ are edge disjoint.
\end{lemma}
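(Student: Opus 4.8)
The plan is to mimic the proof of Lemma~\ref{lemma:safe_path}, adapting it to the more general setting in which a vertex can be marked (i.e.\ have a type-1 distance stored) several times. First I would dispose of the easy case: if $d_{G-e}(s,t)$ is added to $S'$, then $t = v$ and $\pi_T(v,t)$ is the trivial (empty) path, so the two paths are vacuously edge disjoint. Hence assume $d_{G-e}(s,t)$ is added to $S$, so $t \neq v$ and the type-2 test failed, i.e.\ ${\tt dist}(t,e) > \sqrt{1+\epsilon}\cdot d_{G-e}(s,t)$, where ${\tt dist}(t,e) = \min\{{\tt last}(z) + d_T(z,t) \mid z \in A(t,e)\}$.

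The core of the argument is a proof by contradiction. Suppose $\pi_{G-e}(s,t)$ and $\pi_T(v,t)$ share an edge $(z,z')$, with $z$ the endpoint closer to $v$ (hence closer to the root along $\pi_T(v,t)$); note $z,z' \in A(t,e)$ and $z \neq t$. Since $z$ lies on $\pi_{G-e}(s,t)$, we have $d_{G-e}(s,t) = d_{G-e}(s,z) + d_{G-e}(z,t)$, and moreover the subpath of $\pi_{G-e}(s,t)$ from $s$ to $z$ certifies $d_{G-e}(s,z) \le d_{G-e}(s,t)$ using only edges of $G-e$. The key claim is that ${\tt last}(z) \le d_{G-e}(s,z)$ at the moment $t$ is visited; granting this, and using that $z \in A(t,e)$ together with $d_T(z,t) = d_{G-e}(z,t)$ (because $z$ is an ancestor of $t$ in $T$, the tree path realizes the distance in $G-e$ as $e=(u,v)$ is not on $\pi_T(v,t) \supseteq \pi_T(z,t)$), I get
\begin{align*}
{\tt dist}(t,e) &\le {\tt last}(z) + d_T(z,t) \le d_{G-e}(s,z) + d_{G-e}(z,t) = d_{G-e}(s,t) \le \sqrt{1+\epsilon}\cdot d_{G-e}(s,t),
\end{align*}
contradicting the failure of the type-2 test, and thus establishing edge-disjointness.

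The remaining obligation — and the step I expect to be the main obstacle — is proving the claim ${\tt last}(z) \le d_{G-e}(s,z)$ at the time $t$ is visited. I would argue this by tracking the value of ${\tt last}(z)$ over the run of Algorithm~\ref{alg:preprocessing}. Since $z$ is visited before $t$ during the processing of the same edge $e$ (preorder on $T_v$, and $z$ is a strict ancestor of $t$ in $T_v$), it suffices to know ${\tt last}(z) \le d_{G-e}(s,z)$ right after $z$'s visit for edge $e$. When $z$ was processed for $e$: either $z = v$, in which case ${\tt last}(v)$ was just set to $d_{G-e}(s,v) = d_{G-e}(s,z)$; or $z \neq v$, in which case ${\tt dist}(z,e)$ was set to $\min\{{\tt last}(z'') + d_T(z'',z) : z'' \in A(z,e)\}$ and then, if the type-2 test failed, ${\tt last}(z)$ was overwritten with $d_{G-e}(s,z)$. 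In the latter sub-case we are immediately done; in the former sub-case (type-2 test passed) ${\tt last}(z)$ was \emph{not} updated during $e$'s processing, so it retains the value from some earlier iteration, and I need a monotonicity/invariant argument that this stale value is still $\le d_{G-e}(s,z)$. Here I would invoke the decreasing-length property of successive type-1 paths that the paragraph preceding the lemma promises to establish, or more directly argue that ${\tt last}(z)$ always equals $d_{G-e''}(s,z)$ for the most recent edge $e''$ that caused $z$ to be marked, combined with the fact that $\pi_{G-e''}(s,z)$ avoids $e$ (by an inductive appeal to this very lemma applied to the earlier marking of $z$, using that $\pi_{G-e''}(s,z)$ is edge disjoint from $\pi_T(v'',z) \supseteq$ the relevant tree segment containing $e$) so that $d_{G-e}(s,z) \le d_{G-e''}(s,z) = {\tt last}(z)$ — wait, that is the wrong direction, so instead I would set up the clean invariant ``after processing edge $e$, ${\tt last}(z) \le d_{G-e}(s,z)$ for every $z$ already visited under $e$'', and verify it is preserved using that $d_{G-e}(s,z)$ can only be forced up relative to $d_{G-e''}(s,z)$ when $e''$ precedes $e$, which is exactly the situation requiring the careful case analysis above; getting this invariant stated and its inductive step airtight is the delicate part of the proof.
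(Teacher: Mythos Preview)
There is a genuine gap. Your ``key claim'' that ${\tt last}(z)\le d_{G-e}(s,z)$ at the moment $t$ is visited is not true in general, and no invariant of the kind you sketch can salvage it. You actually computed the obstruction yourself: if ${\tt last}(z)=d_{G-e''}(s,z)$ for some earlier edge $e''\prec e$, then (inductively using this very lemma) $\pi_{G-e''}(s,z)$ is edge-disjoint from $\pi_T(v'',z)$ and hence avoids $e$, giving $d_{G-e}(s,z)\le d_{G-e''}(s,z)={\tt last}(z)$ --- the reverse inequality. Worse, it may happen that ${\tt last}(z)=\infty$: at the time edge $e$ is processed, the only prior iterations in which $z$ was visited are for edges strictly above $e$ on $\pi_T(s,z)$, and $z$ may well have passed the type-2 test in all of them; the edge $(\mathrm{parent}(z),z)$, whose processing would force ${\tt last}(z)<\infty$, has not yet been reached in the preorder.

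The paper avoids this by bounding the right quantity. Rather than controlling ${\tt last}$ at the intersection vertex, it uses Proposition~\ref{prop:feasibility} on that vertex (call it $t'$; your $z$ would serve equally well): since $t'$ is visited before $t$ while processing $e$, we already know ${\tt dist}(t',e)\le\sqrt{1+\epsilon}\,d_{G-e}(s,t')$. Writing ${\tt dist}(t',e)={\tt last}(z^\ast)+d_T(z^\ast,t')$ for the minimizer $z^\ast\in A(t',e)\subseteq A(t,e)$, one gets
\[
{\tt last}(z^\ast)+d_T(z^\ast,t)={\tt dist}(t',e)+d_T(t',t)\le\sqrt{1+\epsilon}\,d_{G-e}(s,t')+d_T(t',t)\le\sqrt{1+\epsilon}\,d_{G-e}(s,t),
\]
so the type-2 test for $t$ would have succeeded, a contradiction. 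The point is that ${\tt dist}(\cdot,e)$, not ${\tt last}(\cdot)$, is the object that admits a provable upper bound in terms of $d_{G-e}(s,\cdot)$; your argument needs only this one-line substitution to go through.
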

\begin{proof}
The claim trivially holds when $t=v$ since $\pi_T(v,v)$ contains no edge. Therefore, we assume that $t\neq v$. We prove the claim by contradiction by showing that if $\pi_{G-e}(s,t)$ and $\pi_T(v,t)$ were not edge disjoint, then the algorithm would not add $d_{G-e}(s,t)$ to $S\cup S'$. So, we assume that $\pi_{G-e}(s,t)$ and $\pi_T(v,t)$ are not edge disjoint. Let $t'$ be, among the vertices that are contained in both $\pi_{G-e}(s,t)$ and $\pi_T(v,t)$, the one that is closest to $v$ w.r.t. the number of hops in $\pi_T(v,t)$. Clearly, $t'\neq t$ and $\pi_T(t',t)$ is a shortest path from $t'$ to $t$ in $G$ as well as in $G-e$. Thus, by the suboptimality property of shortest paths,
$d_{G-e}(s,t)=d_{G-e}(s,t')+d_{G-e}(t',t)=d_{G-e}(s,t')+d_T(t',t)$. Let $z \in A(t',e)$ be the vertex such that ${\tt dist}(t',e)={\tt last}(z)+d_T(z,t')$ (possibly $z=t'$). As the algorithm visits $t'$ before visiting $t$, by Proposition \ref{prop:feasibility}, ${\tt dist}(t',e)\leq \sqrt{1+\epsilon}\cdot d_{G-e}(s,t')$ at the beginning of the visit of $t$. Therefore
\begin{align*}
{\tt last}(z)+d_T(z,t)& = {\tt last}(z)+d_T(z,t')+d_T(t',t) = {\tt dist}(t',e)+d_T(t',t)\\
				& \leq \sqrt{1+\epsilon}\cdot d_{G-e}(s,t')+d_T(t',t)\leq \sqrt{1+\epsilon} \cdot d_{G-e}(s,t).
\end{align*}
As ${\tt dist}(t,e)\leq {\tt last}(z)+d_T(z,t)$ already before the execution of the if statement during the visit of $t$, the algorithm never adds $d_{G-e}(s,t)$ to $S\cup S'$. The claim follows.
\end{proof}

\noindent We now prove that ${\tt dist}(t,e)\geq d_{G-e}(s,t)$.

\begin{lemma}\label{lm:feasibility_bis}
At the end of the visit of $t$, ${\tt dist}(t,e)\geq d_{G-e}(s,t)$.
\end{lemma}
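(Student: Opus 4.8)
The plan is to argue by induction on the order in which vertices of $T_v$ are visited by Algorithm \ref{alg:preprocessing} (for the fixed edge $e=(u,v)$), distinguishing according to whether ${\tt dist}(t,e)$ ends up representing a path of type 1 or a path of type 2. The base case $t=v$ is immediate: the algorithm sets ${\tt dist}(v,e)=d_{G-e}(s,v)$, so equality holds. For $t \neq v$, first dispose of the type-1 case: if the \textbf{if} test fires, then ${\tt dist}(t,e)$ is reset to $d_{G-e}(s,t)$ and we are done. So the interesting case is when ${\tt dist}(t,e)={\tt last}(z)+d_T(z,t)$ for the minimizing $z\in A(t,e)$, i.e.\ ${\tt dist}(t,e)$ survives as a path of type 2, and we must show ${\tt last}(z)+d_T(z,t)\geq d_{G-e}(s,t)$.

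The key observation is that the value ${\tt last}(z)$ equals $d_{G-e''}(s,z)$ for the edge $e''=(u'',z)$ or, more precisely, for the most recent edge at whose visit $z$ was assigned a type-1 path; and crucially $e'' \preceq e$, since $z$ is a (living) ancestor of $t\in V(T_v)$, hence $z$ lies on $\pi_T(s,t)$ and so the edge entering $z$ in $T$ precedes or equals $e$ in the preorder, and ${\tt last}(z)$ was last updated no later than the visit of that edge. By Lemma \ref{lm:disjoint_paths} applied at the time ${\tt last}(z)$ was set, the path $\pi_{G-e''}(s,z)$ realizing ${\tt last}(z)$ is edge disjoint from $\pi_T(v'',z)$ where $e''=(u'',v'')$, $v''=z$; in particular it avoids $e''$. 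I then want to argue this path also avoids $e$: since $z\in A(t,e)$ is a living ancestor, $\pi_T(v,z)\subseteq \pi_T(v,t)$, and the path $\pi_{G-e''}(s,z)$ is edge disjoint from $\pi_T(z,\cdot)$ — here I need the disjointness statement phrased so that the realizing path for ${\tt last}(z)$ does not touch any edge of $\pi_T(v,t)$, and in particular does not use $e=(u,v)$ (note $e$ is the edge just above $v$, and $v$ is an ancestor of $z$). Hence $\pi_{G-e''}(s,z)$ is a valid path from $s$ to $z$ in $G-e$, giving $d_{G-e}(s,z)\leq {\tt last}(z)$. Then, concatenating with $\pi_T(z,t)$ (which lies in $T_v$, hence avoids $e$, and has length $d_T(z,t)=d_{G-e}(z,t)$), the triangle inequality yields
\[
d_{G-e}(s,t)\leq d_{G-e}(s,z)+d_{G-e}(z,t)\leq {\tt last}(z)+d_T(z,t)={\tt dist}(t,e),
\]
as required.

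The main obstacle I anticipate is making the edge-disjointness bookkeeping fully rigorous: Lemma \ref{lm:disjoint_paths} as stated guarantees disjointness of $\pi_{G-e}(s,t)$ from $\pi_T(v,t)$ for the \emph{current} failing edge, so to reuse it for ${\tt last}(z)$ I must invoke it at the earlier time $e''$ when $z$ got its type-1 label, and then transfer "avoids $e''$ and avoids $\pi_T(v'',z)$" into "avoids $e$". This hinges on the preorder structure: $v$ is an ancestor of $z$ in $T$ (because $z\in A(t,e)\subseteq V(T_v)$), so the edge $e=(u,v)$ lies on $\pi_T(s,z)\supseteq\pi_T(v'',z)$-prefix, and hence disjointness of the ${\tt last}(z)$-path from $\pi_T(v'',z)$ already forces it to miss $e$ — provided $e \succeq e''$, which holds since $e$ is visited after (or equal to) the edge entering $z$ whenever $z$ is a proper ancestor of $t$, and $z$ being on $\pi_T(v,t)$ means its entering edge is an ancestor edge of $e$ in $T$, hence precedes $e$ in preorder. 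Once this is nailed down, the rest is the routine triangle-inequality chain above; I would also double-check the degenerate subcases $z=t$ (then ${\tt dist}(t,e)={\tt last}(t)$ was a previously recorded type-1 value $=d_{G-e'}(s,t)\geq d_{G-e}(s,t)$ by the same disjointness argument) and $e''=e$ (then ${\tt last}(z)=d_{G-e}(s,z)$ directly).
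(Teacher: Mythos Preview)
Your overall strategy is exactly the paper's: reduce to the type-2 case ${\tt dist}(t,e)={\tt last}(z)+d_T(z,t)$, write ${\tt last}(z)=d_{G-e''}(s,z)$ for the edge $e''$ at which it was last assigned, invoke Lemma~\ref{lm:disjoint_paths} at time $e''$ to get $\pi_{G-e''}(s,z)$ edge disjoint from $\pi_T(v'',z)$, conclude that this path avoids $e$, and finish by the triangle inequality. The skeleton is right, and your final chain of inequalities is the correct one.

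Where the argument breaks is the tree geometry you use to justify $e''\preceq e$ and then $e\in\pi_T(v'',z)$. You write that ``$z$ being on $\pi_T(v,t)$ means its entering edge is an ancestor edge of $e$'' and earlier that ``$v''=z$''. Both are backwards: since $z\in A(t,e)=V(\pi_T(v,t))$, the vertex $z$ is a \emph{descendant} of $v$, so the edge entering $z$ is $e$ or a descendant of $e$ and therefore comes \emph{after} $e$ in preorder; and $v''$ is in general a proper ancestor of $z$ (indeed of $v$), not $z$ itself --- if $v''=z$ the disjointness from $\pi_T(v'',z)$ would be vacuous. The correct (and much simpler) justification, which is what the paper uses, is purely from the loop structure: when ${\tt dist}(t,e)$ is computed the outer loop is at $e$, so ${\tt last}(z)$ was last set at some $e''\preceq e$ with $z\in V(T_{v''})$. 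If $e''=e$ you are done directly. If $e''\prec e$, then both $e''$ and $e$ lie on $\pi_T(s,z)$ with $e''$ strictly higher, hence $e$ is an edge of $\pi_T(v'',z)$, and now Lemma~\ref{lm:disjoint_paths} indeed forces $\pi_{G-e''}(s,z)$ to miss $e$; your concluding inequality then goes through.
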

\begin{proof}
The claim trivially holds if the algorithm sets ${\tt dist}(t,e)=d_{G-e}(s,t)$. Therefore, we need to prove the claim when the condition of the if statement during the visit of $t$ is not satisfied, i.e., ${\tt dist}(t,e)={\tt last}(z)+d_T(z,t)$, for some vertex $z \in A(t,e)$ (possibly, $z=t$). Let ${\tt last}(z)=d_{G-e'}(s,z)$, for some $e'=(u',v')$ such that $z$ is a vertex of $T_{v'}$ (possibly $e'=e$). We divide the proof into the following two cases according to whether $e'=e$ or not.

Consider the case in which $e'=e$ and observe that $e$ is not contained in $\pi_T(z,t)$. Therefore ${\tt dist}(t,e)=d_{G-e'}(s,z)+d_T(z,t)=d_{G-e}(s,z)+d_{G-e}(z,t)\geq d_{G-e}(s,t)$.

Consider the case in which $e'\neq e$ and observe that $e$ is an edge of the path $\pi_T(v',z)$. Furthermore, ${\tt last}(z)=d_{G-e'}(s,z)$ implies that the algorithm has added $d_{G-e'}(s,z)$ to $S \cup S'$. Therefore, by Lemma \ref{lm:disjoint_paths}, $\pi_{G-e'}(s,z)$ and $\pi_T(v',z)$ are edge disjoint. This implies that $e$ is contained neither in $\pi_{G-e'}(s,z)$ nor in $\pi_T(z,t)$. Therefore, $d_{G-e}(s,t)\leq d_{G-e'}(s,z)+d_T(z,t)={\tt last}(z)+d_T(z,t)={\tt dist}(t,e)$, and the claim follows.
\end{proof}

The following proposition allows us to prove that the number of paths of type 1 computed by the algorithm is almost linear in $n$.

\begin{proposition}\label{prop:rank_of_a_node}
Let $e_0,e_1,\ldots,e_k$ be all the pairwise distinct edges of $T$, in the order in which they are visited by the algorithm, such that $d_{G-e_i}(s,t)\in S$. Then, for every $i=0,1,\ldots,k$, $d_{G-e_i}(s,t)<2/\big((\sqrt{1+\epsilon}-1)(1+\epsilon)^{i/2}\big)d_G(s,t)$. Furthermore, $k< 2\cdot \frac{\log\big(2/(\sqrt{1+\epsilon}-1)\big)}{\log(1+\epsilon)}$.
\end{proposition}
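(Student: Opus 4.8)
The plan is to prove both parts by induction on $i$, driven by one mechanism: whenever $d_{G-e_i}(s,t)$ is added to $S$, the test in Algorithm~\ref{alg:preprocessing} succeeded, i.e.\ ${\tt dist}(t,e_i)>\sqrt{1+\epsilon}\cdot d_{G-e_i}(s,t)$, and since ${\tt dist}(t,e_i)=\min\{{\tt last}(z)+d_T(z,t)\mid z\in A(t,e_i)\}$, substituting any one convenient ancestor $z$ turns this into an \emph{upper} bound on $d_{G-e_i}(s,t)$. The whole argument is about choosing $z$ well at each step.

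For the base case $i=0$, writing $e_0=(u_0,v_0)$, I would take $z=v_0\in A(t,e_0)$, for which ${\tt last}(v_0)=d_{G-e_0}(s,v_0)$ at the time $t$ is visited (it is set in the first line of the visit of $e_0$ and not overwritten before $t$). Using that the tree path $\pi_T(v_0,t)$ survives in $G-e_0$ and has length $d_G(v_0,t)\le d_G(s,t)$, a short triangle-inequality chain (the same one underlying Lemma~\ref{lemma:2_apx_conditions}) gives ${\tt dist}(t,e_0)\le d_{G-e_0}(s,v_0)+d_G(v_0,t)\le d_{G-e_0}(s,t)+2d_G(s,t)$; comparing this with the lower bound $\sqrt{1+\epsilon}\cdot d_{G-e_0}(s,t)$ and rearranging yields $d_{G-e_0}(s,t)<\frac{2}{\sqrt{1+\epsilon}-1}\,d_G(s,t)$, which is the claim for $i=0$.

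For the inductive step I would instead take $z=t$, so that ${\tt dist}(t,e_i)\le{\tt last}(t)$. The crucial claim is that, at the moment $t$ is visited while processing $e_i$, we still have ${\tt last}(t)=d_{G-e_{i-1}}(s,t)$. This is because ${\tt last}(t)$ is written only when some $d_{G-e}(s,t)$ is added to $S$ — which, by maximality of the list $e_0,\dots,e_k$, never happens for an $e$ strictly between $e_{i-1}$ and $e_i$ — or when the unique edge from $t$ to its parent is processed; but that edge lies inside $T_{v_i}$ (as $t\in V(T_{v_i})\setminus\{v_i\}$) and is therefore visited strictly after $e_i$ in the preorder. Hence $\sqrt{1+\epsilon}\cdot d_{G-e_i}(s,t)<{\tt dist}(t,e_i)\le d_{G-e_{i-1}}(s,t)$, and multiplying the inductive hypothesis by $1/\sqrt{1+\epsilon}$ gives exactly $d_{G-e_i}(s,t)<\frac{2}{(\sqrt{1+\epsilon}-1)(1+\epsilon)^{i/2}}\,d_G(s,t)$. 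For the last assertion, $d_{G-e_k}(s,t)\ge d_G(s,t)$ since removing an edge cannot shorten a distance (the statement being vacuous when $d_G(s,t)=0$), so the $i=k$ bound forces $(1+\epsilon)^{k/2}<\frac{2}{\sqrt{1+\epsilon}-1}$, and taking logarithms gives $k<2\cdot\frac{\log(2/(\sqrt{1+\epsilon}-1))}{\log(1+\epsilon)}$.

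The main obstacle is not analytic but the bookkeeping in the inductive step: one must verify against the pseudocode of Algorithm~\ref{alg:preprocessing} and the definition of $\prec$ that no event reassigns ${\tt last}(t)$ between the visit of $e_{i-1}$ and the visit of $t$ during the processing of $e_i$. Once that invariant is established, both the base case and the step reduce to the two elementary triangle-inequality computations above.
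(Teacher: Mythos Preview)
Your proposal is correct and follows essentially the same argument as the paper's own proof: the base case uses the ancestor $z=v_0$ and the triangle-inequality chain to obtain $d_{G-e_0}(s,t)<\frac{2}{\sqrt{1+\epsilon}-1}\,d_G(s,t)$, and the inductive step uses $z=t$ together with the invariant ${\tt last}(t)=d_{G-e_{i-1}}(s,t)$ to deduce $\sqrt{1+\epsilon}\cdot d_{G-e_i}(s,t)<d_{G-e_{i-1}}(s,t)$, after which the bound on $k$ follows exactly as you wrote. If anything, your bookkeeping justification for why ${\tt last}(t)$ is not overwritten between $e_{i-1}$ and the visit of $t$ during the processing of $e_i$ (in particular the observation that the parent edge of $t$ lies in $T_{v_i}$ and is thus visited strictly after $e_i$) is more explicit than the paper's, which simply asserts the invariant.
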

\begin{proof}
Let $e_0=(u_0,v_0)$ and observe that at the end of the visit of $e_0$ and $v_0$
\begin{align*}
{\tt last}(v_0)+d_T(v_0,t)	& =d_{G-e_0}(s,v_0)+d_T(v_0,t) \\
				& \leq d_{G-e_0}(s,t)+d_T(t,v_0)+d_T(v_0,t)\\
				& \leq d_{G-e_0}(s,t)+2d_T(s,t) = d_{G-e_0}(s,t)+2d_G(s,t).
\end{align*}
Since $d_{G-e_0}(s,t) \in S$, $\sqrt{1+\epsilon}\cdot d_{G-e_0}(s,t)<{\tt last}(v_0)+d_T(v_0,t)$,  and therefore
\begin{equation}\label{eq:one}
d_{G-e_0}(s,t) < \frac{2}{\sqrt{1+\epsilon}-1} d_G(s,t).
\end{equation}

Next, observe that the value ${\tt last}(t)$ at the beginning of the visit of edge $e_i$, with $1\leq i \le k$, is equal to $d_{G-e_{i-1}}(s,t)$. Since $d_{G-e_i}(s,t) \in S$, we have that
\begin{equation}\label{eq:two}
\sqrt{1+\epsilon}\cdot d_{G-e_i}(s,t)<d_{G-e_{i-1}}(s,t)\,\,\,\,\,\,\,\, \text{for every $i=1,\ldots,k$}.
\end{equation}
Thus, if, for any $i>0$, we combine inequality (\ref{eq:one}) and all the inequalities (\ref{eq:two}) with $j\leq i$, we obtain $(1+\epsilon)^{i/2}d_{G-e_i}(s,t)<2/(\sqrt{1+\epsilon}-1)d_G(s,t)$, i.e.,
\[
d_{G-e_i}(s,t) < \frac{2}{(\sqrt{1+\epsilon}-1)(1+\epsilon)^{i/2}}d_G(s,t).
\]

Moreover, using $d_{G}(s,t)\leq d_{G-e_k}(s,t)$ in $d_{G-e_k}(s,t)<2/\big((\sqrt{1+\epsilon}-1)(1+\epsilon)^{k/2}\big)d_G(s,t)$ we obtain $(1+\epsilon)^{k/2}<2/(\sqrt{1+\epsilon}-1)$, i.e.,
\[
k<2\cdot \frac{\log\big(2/(\sqrt{1+\epsilon}-1)\big)}{\log(1+\epsilon)}.
\]
The claim follows.
\end{proof}

Observe that $\log \big(2/(\sqrt{1+\epsilon}-1)\big)=O(\log(1/\epsilon))$, and that $\log(1+\epsilon)=\Theta(\epsilon)$. Therefore, using Proposition \ref{prop:rank_of_a_node} and the fact that $|S'|=n-1$, we obtain

\begin{corollary}\label{cor:size_of_type_1_paths}
$|S\cup S'|=O\left(n\cdot\frac{1}{\epsilon} \log \frac{1}{\epsilon}\right)$.
\end{corollary}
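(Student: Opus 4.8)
The plan is to split the count as $|S\cup S'|\le |S|+|S'|$ and bound the two terms separately. The term $|S'|$ is immediate: the only line of Algorithm~\ref{alg:preprocessing} that inserts into $S'$ is the one executed once at the start of the processing of each edge $e=(u,v)\in E(T)$, when it sets ${\tt last}(v),{\tt dist}(v,e)=d_{G-e}(s,v)$ and adds this value to $S'$. Since $T$ has exactly $n-1$ edges, $|S'|\le n-1$. (The paper already records $|S'|=n-1$.)

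For $|S|$, the idea is to \emph{charge} every insertion into $S$ to the vertex $t$ being visited at that moment, and then to use Proposition~\ref{prop:rank_of_a_node} to cap, for each fixed $t$, the number of such insertions. Concretely, whenever a value is added to $S$ it has the form $d_{G-e}(s,t)$ where $t$ is the vertex currently visited and $e$ the edge currently processed; hence $|S|$ is at most $\sum_{t\in V(G)} N_t$, where $N_t$ denotes the number of distinct edges $e\in E(T)$ such that $d_{G-e}(s,t)$ is added to $S$. (Counting insertions rather than distinct numerical values only over-estimates $|S|$, which is harmless since we want an upper bound.) By Proposition~\ref{prop:rank_of_a_node}, for every $t$ we have $N_t=k+1<2\cdot\frac{\log\big(2/(\sqrt{1+\epsilon}-1)\big)}{\log(1+\epsilon)}+1$.

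It remains to estimate this last quantity in terms of $\epsilon$, which is elementary. From $\sqrt{1+\epsilon}-1=\frac{\epsilon}{\sqrt{1+\epsilon}+1}$ and $2<\sqrt{1+\epsilon}+1<3$ for $0<\epsilon<1$ we get $\frac{\epsilon}{3}<\sqrt{1+\epsilon}-1<\frac{\epsilon}{2}$, hence $\frac{2}{\sqrt{1+\epsilon}-1}<\frac{6}{\epsilon}$ and $\log\big(2/(\sqrt{1+\epsilon}-1)\big)=O\big(\log(1/\epsilon)\big)$; on the other side $\ln(1+\epsilon)\le\epsilon$ and $\ln(1+\epsilon)\ge\epsilon-\epsilon^2/2\ge\epsilon/2$ give $\log(1+\epsilon)=\Theta(\epsilon)$. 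Plugging these into the bound on $N_t$ yields $N_t=O\big(\frac{1}{\epsilon}\log\frac{1}{\epsilon}\big)$ uniformly in $t$, so $|S|\le\sum_{t}N_t=O\big(n\cdot\frac{1}{\epsilon}\log\frac{1}{\epsilon}\big)$, and therefore $|S\cup S'|\le |S|+|S'|=O\big(n\cdot\frac{1}{\epsilon}\log\frac{1}{\epsilon}\big)$.

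I do not expect a genuine obstacle here: all the substance is carried by Proposition~\ref{prop:rank_of_a_node}, and what is left is the routine per-$\epsilon$ estimate together with the bookkeeping observation that each insertion into $S$ is attributable to the unique vertex being visited at that time. The only mild care needed is precisely that bookkeeping step — making sure we are summing a per-vertex bound over the $n$ vertices rather than, say, a per-edge bound over the $n-1$ edges — but either viewpoint gives the same asymptotics.
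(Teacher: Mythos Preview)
Your proposal is correct and follows essentially the same approach as the paper: the paper derives the corollary directly from Proposition~\ref{prop:rank_of_a_node} together with the observations $\log\big(2/(\sqrt{1+\epsilon}-1)\big)=O(\log(1/\epsilon))$, $\log(1+\epsilon)=\Theta(\epsilon)$, and $|S'|=n-1$. You simply make explicit the per-vertex charging and the elementary $\epsilon$-estimates that the paper leaves implicit.
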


\begin{lemma}
Algorithm \ref{alg:preprocessing} can be implemented to run in $O(mn+n^2\log n)$ time and \linebreak $O\left(m+n\cdot\frac{1}{\epsilon} \log \frac{1}{\epsilon}\right)$ space.
\end{lemma}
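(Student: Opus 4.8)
The plan is to account separately for the time spent computing the single-source distances in the failure graphs, the time spent in the inner loop evaluating the type-2 candidate paths, and the space needed to store everything persistently.

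First I would handle the dominant cost. For each of the $n-1$ edges $e=(u,v) \in E(T)$ we need an SPT of $G-e$ rooted at $s$, together with the distances $d_{G-e}(s,\cdot)$; using Dijkstra this costs $O(m+n\log n)$ time and $O(m)$ space per edge, for a total of $O(mn+n^2\log n)$ time. Since we process the edges of $T$ one at a time, we may discard the SPT of $G-e$ once we are done with $e$, so this part of the computation uses only $O(m)$ working space at any moment. This already gives the claimed time bound, so it remains to argue the inner loop does not exceed it and that the extra space is as stated.

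Next I would bound the inner loop. Fix $e=(u,v)$. For each $t\in V(T_v)$ the algorithm computes ${\tt dist}(t,e)=\min\{{\tt last}(z)+d_T(z,t)\mid z\in A(t,e)\}$. Rather than recomputing this minimum from scratch for every $t$ (which would be too slow), I would exploit that $A(t,e)=A(\mathrm{parent}(t),e)\cup\{t\}$ when $t\neq v$, and maintain, during the preorder DFS of $T_v$, the running quantity $\mu(t)=\min\{{\tt last}(z)+d_T(s,z)\mid z\in A(t,e)\}$; then ${\tt dist}(t,e)=\mu(t)-d_T(s,t)$ since $d_T(z,t)=d_T(s,t)-d_T(s,z)$ for an ancestor $z$ of $t$ in $T$ (here $d_T(s,\cdot)$ are the precomputed tree distances). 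Updating $\mu$ on entering $t$ costs $O(1)$, and on leaving $t$ we restore the previous value (stack-based), so the whole preorder visit of $T_v$ costs $O(|V(T_v)|)=O(n)$ time, hence $O(n^2)$ over all edges $e$, which is absorbed by $O(mn+n^2\log n)$. The if-test $\,{\tt dist}(t,e)>\sqrt{1+\epsilon}\cdot d_{G-e}(s,t)$ is $O(1)$ given the precomputed $d_{G-e}(s,t)$, and updating ${\tt last}(t)$ is $O(1)$.

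Finally I would account for space. The graph $T$, the arrays ${\tt last}(\cdot)$ and $d_T(s,\cdot)$ and the auxiliary stack take $O(n)$; the transient per-edge SPT of $G-e$ takes $O(m)$ and is reused. The only quantity that must persist across all iterations is the output, i.e.\ the sets $S$ and $S'$ of type-1 distance values, and by Corollary~\ref{cor:size_of_type_1_paths} we have $|S\cup S'|=O\left(n\cdot\frac1\epsilon\log\frac1\epsilon\right)$; moreover to be able to answer queries later we also keep, alongside each stored value $d_{G-e}(s,t)\in S\cup S'$, the pair $(t,e)$, which is still $O\left(n\cdot\frac1\epsilon\log\frac1\epsilon\right)$ words. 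Adding these contributions gives total space $O\left(m+n\cdot\frac1\epsilon\log\frac1\epsilon\right)$, as claimed. The step I expect to require the most care is the inner-loop implementation: one must argue that the minimum over living ancestors can be maintained incrementally in $O(1)$ amortized time per visited vertex — the identity $d_T(z,t)=d_T(s,t)-d_T(s,z)$ for $z$ an ancestor of $t$ is what makes the "subtract a common offset" trick work, and the stack-based rollback on the DFS is what keeps it $O(1)$ rather than $O(\mathrm{depth})$ per node.
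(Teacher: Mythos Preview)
Your proposal is essentially correct and follows the same skeleton as the paper: compute one SPT of $G-e$ per tree edge for the $O(mn+n^2\log n)$ term, show the inner loop costs $O(1)$ per visited vertex, reuse the $O(m)$ SPT workspace across edges, and invoke Corollary~\ref{cor:size_of_type_1_paths} for the persistent output size.

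There is one algebraic slip in your inner-loop trick. From $d_T(z,t)=d_T(s,t)-d_T(s,z)$ you get ${\tt last}(z)+d_T(z,t)=\bigl({\tt last}(z)-d_T(s,z)\bigr)+d_T(s,t)$, so the quantity to maintain is $\mu(t)=\min\{{\tt last}(z)-d_T(s,z)\mid z\in A(t,e)\}$ and then ${\tt dist}(t,e)=\mu(t)+d_T(s,t)$; you have the signs reversed. With this fix the stack-based DFS argument goes through. The paper avoids the offset/rollback machinery altogether by observing the direct recurrence
\[
\min_{z\in A(t,e)}\{{\tt last}(z)+d_T(z,t)\}=\min\{{\tt last}(t),\,{\tt dist}(t',e)+w(t',t)\},
\]
where $t'$ is the parent of $t$; since vertices are visited in preorder, ${\tt dist}(t',e)$ is already available in a per-vertex array, so no stack is needed. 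Both routes yield $O(1)$ per vertex and hence the same bounds.
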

\begin{proof}
First we prove the time bound. Clearly, the inizialization of variables takes $O(n)$ time. Let $e=(u,v)$ be an edge that is visited by the algorithm. The algorithm computes an SPT of $G-e$ rooted at $s$ in $O(m+n\log n)$ time. Let $t \neq v$ be the vertex that is going to be visited by the algorithm and let $t'$ be the parent of $t$ in $T$. Observe that
\begin{align}\label{eq:efficient_algorithm}
\min_{z \in A(t,e)}&\big\{{\tt last}(z) +d_T(z,t)\big\} = \min\left\{{\tt last}(t),\min_{z \in A(t',e)}\big\{{\tt last}(z)+d_T(z,t)\big\}\right\}\notag{}\\
		& = \min\left\{{\tt last}(t), \min_{z \in A(t',e)}\big\{{\tt last}(z)+d_T(z,t')\big\}+w(t',t)\right\}\\
		& = \min\big\{{\tt last}(t),{\tt dist}(t',e)+w(t',t)\big\},\notag{}
\end{align}
Therefore, each value ${\tt dist}(t,e)$ can be computed in constant time rather than in $O(n)$ time. Hence, the overall running time is $O(mn+n^2\log n)$.

Concerning the space complexity, observe that, from Equation (\ref{eq:efficient_algorithm}), the algorithm does not need to store all the values ${\tt dist}(t,e)$ but, for each $t$, it is enough to remember the last computed value ${\tt dist}(t,e)$. This can be clearly done with an array of $n$ elements. Next, observe that, during the visit of $e$, the algorithm only needs the one-to-all distances in $G-e$. This implies that there is no need to keep all the $n-1$ SPT's of $G-e$, for every $e \in E(T)$, at the same time and therefore, all these SPT's can share the same $O(n)$ space. Finally, $|S\cup S'|=O\left(n\cdot\frac{1}{\epsilon} \log \frac{1}{\epsilon}\right)$ by Corollary \ref{cor:size_of_type_1_paths}. The claim follows.
\end{proof}

\subsection{The data structure.}

We now describe how the values in $S$ and $S'$ can be organized in a data structure of size $O\left(n\cdot \frac{1}{\epsilon} \log \frac{1}{\epsilon}\right)$ so that our distance oracle can compute a $(\sqrt{1+\epsilon})$-approximation of ${\tt dist}(t,e)$ in $O\left(\log n\cdot \frac{1}{\epsilon} \log \frac{1}{\epsilon}\right)$ time.

Remind that we say that $e'\prec e''$ if the preprocessing algorithm has visited $e'$ before visiting $e''$, and that we also use $e'\preceq e''$ to denote that either $e'\prec e''$ or $e'=e''$. Let $k=\left\lfloor 2\cdot \frac{\log\big(2/(\sqrt{1+\epsilon}-1)\big)}{\log(1+\epsilon)}\right\rfloor$ and let $a_i=\frac{2}{(\sqrt{1+\epsilon}-1)(1+\epsilon)^{i/2}}$. Finally, for every $i=0,1,\ldots,k$, let
\[
S_i=\Big\{d_{G-e'}(s,z) \in S \mid a_{i+1}\cdot d_G(s,z)\leq d_{G-e'}(s,z)< a_i \cdot d_G(s,z)\Big\}.
\]
By Proposition \ref{prop:rank_of_a_node}, we have that $\{S_i\mid i=0,1,\ldots,k\}$ is a partition of $S$. 

We maintain a set of $k+1$ trees ${\cal T}_0,{\cal T}_1,\ldots,{\cal T}_{k}$, one for each $S_i$.
Each tree ${\cal T}_i$ is a copy of $T$, where each vertex $z$, such that $d_{G-e'}(s,z)\in S_i$, has a label $\ell_i(z)=e'$. Every other vertex $z \in V(G)\setminus S_i$ has a label $\ell_i(z)=\infty$ such that $e'\prec \infty$, for every edge $e' \in E(T)$.


In the following, we denote the value of ${\tt last}(z)$ at the end of the visit of edge $e'$ by ${\tt last}(z,e')$. First of all, we prove the following proposition.

\begin{figure}[t]
	\centering
	\includegraphics[scale=1]{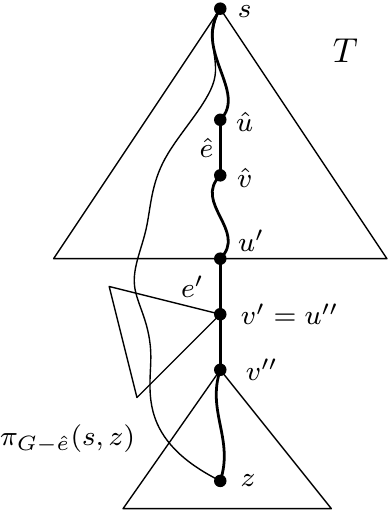}
	\caption{Representation of the proof of Proposition~\ref{prop:labels}. The shortest path between $s$ and $t$ in $T$ is shown. Notice that the path $\pi_{G-\hat{e}}(s,z)$ is edge disjoint from the path $\pi_T(\hat{v},z)$.}
	\label{fig:labels}
\end{figure}

\begin{proposition}\label{prop:labels}
If $e'\preceq e''$, then ${\tt last}(z,e'')\leq {\tt last}(z,e')$.
\end{proposition}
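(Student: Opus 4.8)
The plan is to reduce the statement to a single ``monotone update'' property of the variable ${\tt last}(z)$. Observe that ${\tt last}(z)$ is modified only at the two assignment lines of Algorithm~\ref{alg:preprocessing}, and in both cases it is set to $d_{G-\widehat e}(s,z)$, where $\widehat e=(\widehat u,\widehat v)$ is the edge currently being processed; moreover, at that very moment $d_{G-\widehat e}(s,z)$ is added to $S\cup S'$. Hence ${\tt last}(z,\cdot)$ is a step function of the visit order on $E(T)$, and to prove that ${\tt last}(z,e'')\le {\tt last}(z,e')$ whenever $e'\preceq e''$ it suffices to show that each update can only weakly decrease the stored value, that is: whenever Algorithm~\ref{alg:preprocessing} sets ${\tt last}(z)=d_{G-\widehat e}(s,z)$ during the processing of $\widehat e$, the new value $d_{G-\widehat e}(s,z)$ is at most the value of ${\tt last}(z)$ immediately before this assignment. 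Chaining this over the finitely many updates occurring at edges lying (weakly) between $e'$ and $e''$ in the visit order then yields the proposition.

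First I would dispose of the case in which the update is performed through the \textbf{if} statement, i.e.\ $z\neq \widehat v$. Since $z\in A(z,\widehat e)$ by definition of living ancestors, the quantity ${\tt dist}(z,\widehat e)=\min\{{\tt last}(w)+d_T(w,z)\mid w\in A(z,\widehat e)\}$ computed just before the test is at most ${\tt last}(z)+d_T(z,z)={\tt last}(z)$, where here ${\tt last}(z)$ is still its value from before the processing of $\widehat e$ (as $z\neq\widehat v$ is visited only once during this iteration). The update is triggered exactly when ${\tt dist}(z,\widehat e)>\sqrt{1+\epsilon}\cdot d_{G-\widehat e}(s,z)$, so, using $\sqrt{1+\epsilon}>1$, we get $d_{G-\widehat e}(s,z)<{\tt dist}(z,\widehat e)\le {\tt last}(z)$, as required.

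The delicate case is $z=\widehat v$, where the assignment ${\tt last}(\widehat v)=d_{G-\widehat e}(s,\widehat v)$ is unconditional. If ${\tt last}(\widehat v)=\infty$ beforehand there is nothing to prove; otherwise ${\tt last}(\widehat v)$ currently equals $d_{G-e'}(s,\widehat v)$, where $e'=(u',v')$ is the most recently processed edge at which $\widehat v$ received a path of type~1, so that $d_{G-e'}(s,\widehat v)$ has been added to $S\cup S'$ and $e'\prec \widehat e$. Being visited during the processing of $e'$, the vertex $\widehat v$ belongs to $V(T_{v'})$; and $\widehat v\neq v'$ (otherwise $u'$ and $\widehat u$ would both be the parent of $\widehat v$ in $T$, forcing $e'=\widehat e$), hence $\widehat v$ is a proper descendant of $v'$ and the last edge of the tree path $\pi_T(v',\widehat v)$ is precisely $\widehat e=(\widehat u,\widehat v)$. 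By Lemma~\ref{lm:disjoint_paths} the path $\pi_{G-e'}(s,\widehat v)$ is edge disjoint from $\pi_T(v',\widehat v)$, so it does not use $\widehat e$ and is therefore a path from $s$ to $\widehat v$ in $G-\widehat e$; consequently $d_{G-\widehat e}(s,\widehat v)\le w(\pi_{G-e'}(s,\widehat v))=d_{G-e'}(s,\widehat v)={\tt last}(\widehat v)$, which closes this case. I expect this last case to be the main obstacle, since there the update is forced irrespective of the current value, and the crux is to invoke Lemma~\ref{lm:disjoint_paths} for the earlier edge $e'$ to certify that the previously stored replacement path still avoids the newly failed edge $\widehat e$.
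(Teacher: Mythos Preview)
Your proof is correct and follows essentially the same approach as the paper: both show that each update of ${\tt last}(z)$ can only decrease the stored value, with the key step being the invocation of Lemma~\ref{lm:disjoint_paths} to certify that the previously stored replacement path avoids the newly failing edge. The only cosmetic difference is that the paper reduces to consecutive tree edges (taking $v'=u''$) and treats both assignment lines uniformly via Lemma~\ref{lm:disjoint_paths}, whereas you split by assignment line and dispose of the $z\neq\widehat v$ case directly from the if-test condition.
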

\begin{proof}
Let $e'=(u',v')$ and $e''=(u'',v'')$. Notice that the claim can be proved by showing that it holds under the assumption that $v'=u''$. Furthermore, we can also assume that ${\tt last}(z,e')\neq \infty$ as well as ${\tt last}(z,e'')\neq {\tt last}(z,e')$, otherwise the claim would be trivially true. This last assumption together with $v'=u''$ imply that ${\tt last}(z,e'')=d_{G-e''}(s,z)$. Let ${\tt last}(z,e')=d_{G-\hat e}(s,z)$, for some $\hat e\preceq e'$, with $\hat e=(\hat u,\hat v)$. Clearly, $d_{G-\hat e}(s,z) \in S\cup S'$. Therefore, by Lemma \ref{lm:disjoint_paths}, $\pi_{G-\hat e}(s,z)$ and $\pi_T(\hat v,z)$ are edge disjoint (see Figure~\ref{fig:labels}). Since $e''$ is an edge of $\pi_T(\hat v,z)$, $\pi_{G-\hat e}(s,z)$ is also a path from $s$ to $t$ in $G-e''$ and therefore
${\tt last}(z,e'')=d_{G-e''}(s,z)\leq d_{G-\hat e}(s,z)={\tt last}(z,e')$.
\end{proof}

Let $e=(u,v) \in E(T)$ and let $t$ be a vertex of $T_v$. Using Proposition \ref{prop:labels}, we have that either ${\tt dist}(t,e)={\tt last}(v,e)+d_T(v,t)=d_{G-e}(s,v)+d_T(v,t)$, or
\begin{flalign*}
{\tt dist}(t,e) & = \min\big\{{\tt last}(z,e)+d_T(z,t)\mid z \in A(t,e)\setminus\{v\}\big\}\\
				& = \min\big\{{\tt last}(z,e)+d_T(z,t)\mid z \in A(t,e)\big\}\\
				& = \min_{i=0,1,\ldots,k}\Big\{\min\big\{{\tt last}(z,\ell_i(z))+d_T(z,t) \mid z \in A(t,e)\wedge\ell_i(z)\preceq e\big\}\Big\}\\
				& = \min_{i=0,1,\ldots,k}\Big\{\delta_i:=\min \big\{d_{G-e'}(s,z)+d_T(z,t) \\
				&& \mathllap{ \mid z \in A(t,e) \wedge d_{G-e'}(s,z) \in S_i \wedge  e'\preceq e\big\}\Big\}. \;\; }
\end{flalign*}

In the former case, ${\tt dist}(t,e)$ is available in $O(1)$ time, since $d_{G-e}(s,v)$ is stored in $S'$. In the latter case, we now show how to compute, for any fixed $i=0,1,\ldots,k$, a $(\sqrt{1+\epsilon})$-approximate upper bound to $\delta_i$
in $O(\log n)$ time. Using Proposition \ref{prop:rank_of_a_node}, this will imply that our oracle is able to answer a query in $O\left(\log n\cdot \frac{1}{\epsilon} \log \frac{1}{\epsilon}\right)$ time.

First of all, we prove that the labels of each ${\cal T}_i$ satisfy a nice property.

\begin{lemma}\label{lm:label_monotonicity}
Let $z'$ and $z''$ be two distinct vertices of $A(t,e)$ such that $z''$ is a proper ancestor of $z'$ and $\ell_i(z')=e'$ and $\ell_i(z'')=e''$, for some edges $e',e'' \in E(T)$, with $e',e''\preceq e$ (possibly, $e'=e''$).
We have that $d_{G-e''}(s,z'')+d_T(z'',t)\leq \sqrt{1+\epsilon}\cdot\big( d_{G-e'}(s,z')+d_T(z',t)\big)$.
\end{lemma}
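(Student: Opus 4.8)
We want to compare two "type-2 candidate" values at nested ancestors $z'',z'$ of $t$, where $z''$ sits strictly above $z'$ on $\pi_T(v,t)$, and their labels are $\ell_i(z')=e'$, $\ell_i(z'')=e''$ with $e',e''\preceq e$. The quantity $d_{G-e'}(s,z')+d_T(z',t)$ is the length of a type-2 path through $z'$, and similarly for $z''$. The whole point of grouping the marked distances into the band $S_i$ is that inside a single band all the stored distances $d_{G-\cdot}(s,\cdot)$ are within a factor $\sqrt{1+\epsilon}$ of each other \emph{relative to the fault-free distance}, because $a_i/a_{i+1}=\sqrt{1+\epsilon}$; so the plan is to reduce the inequality to a comparison of $d_{G-e''}(s,z'')$ against $d_{G-e'}(s,z')$ after normalizing by $d_G(s,z'')$ and $d_G(s,z')$ respectively.

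**Main chain of inequalities.** First I would split off the tree part: since $z''$ is an ancestor of $z'$ which is an ancestor of $t$ along $\pi_T$, we have $d_T(z'',t)=d_T(z'',z')+d_T(z',t)$, so it suffices to show
\[
d_{G-e''}(s,z'')+d_T(z'',z') \le \sqrt{1+\epsilon}\cdot d_{G-e'}(s,z') + \big(\sqrt{1+\epsilon}-1\big)d_T(z',t),
\]
and in fact I expect the cleaner route is to bound $d_{G-e''}(s,z'')+d_T(z'',z')$ directly by $\sqrt{1+\epsilon}\cdot d_{G-e'}(s,z')$, which then gives the lemma after adding $d_T(z',t)\le\sqrt{1+\epsilon}\,d_T(z',t)$. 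To get that, note $d_G(s,z')\le d_G(s,z'')+d_T(z'',z')$ (the path $\pi_T(s,z'')\circ\pi_T(z'',z')$ is a walk in $G$ from $s$ to $z'$, and in fact a shortest one since $z',z''$ are on a root path of $T$), so $d_T(z'',z')=d_G(s,z')-d_G(s,z'')$. Since $d_{G-e''}(s,z'')\in S_i$ we have $d_{G-e''}(s,z'') < a_i\, d_G(s,z'')$, hence
\[
d_{G-e''}(s,z'')+d_T(z'',z') < a_i\, d_G(s,z'') + d_G(s,z') - d_G(s,z'') = (a_i-1)d_G(s,z'') + d_G(s,z').
\]
Now I use $d_G(s,z'')\le d_G(s,z')$ and $a_i-1\ge 0$ to get the bound $a_i\, d_G(s,z')$. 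On the other side, $d_{G-e'}(s,z')\in S_i$ gives $d_{G-e'}(s,z')\ge a_{i+1}\,d_G(s,z') = \frac{a_i}{\sqrt{1+\epsilon}}\,d_G(s,z')$, i.e. $a_i\,d_G(s,z')\le \sqrt{1+\epsilon}\cdot d_{G-e'}(s,z')$. Chaining these yields $d_{G-e''}(s,z'')+d_T(z'',z') \le \sqrt{1+\epsilon}\cdot d_{G-e'}(s,z')$, and adding $d_T(z',t)\le\sqrt{1+\epsilon}\,d_T(z',t)$ finishes the proof.

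**The delicate point.** The step I expect to need the most care is the identity $d_T(z'',z')=d_G(s,z')-d_G(s,z'')$, i.e. that $\pi_T(z'',z')$ is a shortest path in $G$ and concatenating with $\pi_T(s,z'')$ stays shortest: this is exactly the suboptimality of shortest paths applied along the root-to-$t$ path of the SPT $T$, and it is legitimate because $z''$ and $z'$ both lie on $\pi_T(s,t)$ with $z''$ closer to $s$. I would also double-check the "possibly $e'=e''$" corner (the argument above never used $e'\neq e''$, so it is fine) and the edge case $z''=v$ or $z'=t$ (again harmless, since $d_T(z'',z')$ or $d_T(z',t)$ simply becomes $0$). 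One should also note we only ever use the membership $d_{G-e''}(s,z'')\in S_i$ and $d_{G-e'}(s,z')\in S_i$ in the form of the two defining inequalities of $S_i$; the hypotheses $e',e''\preceq e$ are not actually needed for this particular inequality but are carried along because that is the regime in which the lemma will be invoked in the query algorithm.

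**Summary of the plan.** In order: (1) decompose the tree distances using that $z'',z',t$ are nested on a root path of $T$; (2) rewrite $d_T(z'',z')$ as a difference of fault-free distances via shortest-path suboptimality; (3) apply the upper bound from $d_{G-e''}(s,z'')\in S_i$ and collapse the $d_G(s,z'')$ terms using monotonicity of $d_G(s,\cdot)$ along the path and $a_i\ge 1$; (4) apply the lower bound from $d_{G-e'}(s,z')\in S_i$ together with $a_i/a_{i+1}=\sqrt{1+\epsilon}$; (5) add back $d_T(z',t)$. No step is computationally heavy; the only thing to be careful about is justifying the tree-distance manipulations from the SPT structure.
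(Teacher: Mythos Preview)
Your proof is correct and follows essentially the same route as the paper: both use $d_{G-e''}(s,z'') < a_i\, d_G(s,z'')$, then absorb $d_T(z'',z')$ into $a_i\, d_G(s,z')$ via $d_G(s,z') = d_G(s,z'') + d_T(z'',z')$ and $a_i \ge 1$, and finally apply $a_i\, d_G(s,z') \le \sqrt{1+\epsilon}\, d_{G-e'}(s,z')$ before adding back $d_T(z',t)$. Your presentation is simply more explicit about the intermediate identity $d_T(z'',z') = d_G(s,z') - d_G(s,z'')$ and the fact that the hypothesis $e',e'' \preceq e$ is not used in the inequality itself.
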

\begin{proof}

Since $d_{G-e''}(s,z'') \in S_i$, we have that $d_{G-e''}(s,z'')< a_i \cdot d_G(s,z'')$. Furthermore, $d_{G-e'}(s,z') \in S_i$ implies that $d_{G-e'}(s,z') \geq a_{i+1}\cdot d_G(s,z')=a_i/ \sqrt{1+\epsilon}\cdot d_G(s,z')$. As a consequence, $d_{G-e''}(s,z'')+d_T(z'' ,t)<a_i \cdot d_G(s,z'')+d_T(z'',z')+d_T(z',t)\leq a_i \cdot d_G(s,z')+d_T(z',t)
 \leq\sqrt{1+\epsilon}\cdot d_{G-e'}(s,z')+d_T(z',t)\leq \sqrt{1+\epsilon}\cdot\big(d_{G-e'}(s,z')+d_T(z',t)\big)$.
\end{proof}

Let $z'' \in A(t,e)$ be the vertex closest to $v$ w.r.t. $T$ such that $\ell_i(z)=e''\preceq e$, if such a vertex exist. Let $\delta_i=d_{G-e'}(s,z')+d_T(z',t)$, for some $e'$ and $z'$ such that $z' \in A(t,e)$, $d_{G-e'}(s,z') \in S_i$, and $e' \preceq e$. Observe that $d_{G-e''}(s,z'')+d_T(z'',t)\geq \delta_i$. Moreover, since $z'$ and $z''$ satisfy all the hyphotesis of Lemma \ref{lm:label_monotonicity}, we have that

\[\delta_i\leq d_{G-e''}(s,z'')+d_T(z'',t)\leq \sqrt{1+\epsilon}\cdot \delta_i.\]

Therefore, the value $d_{G-e''}(s,z'')+d_T(z'',t)$ is a $(\sqrt{1+\epsilon})$-approximate upper bound to the value $\delta_i$. Now we show how the vertex $z''$ can be computed in $O(\log n)$ time.

To this aim, we preprocess each tree $\mathcal{T}_i$ in order to build a linear-size data structure that answers \texttt{BVQ} queries in constant time. This can be done in $O(n \log n)$ time per tree. We also preprocess $T$ so we are able to perform \emph{level-ancestor} queries in constant time. The size needed by this latter data structure is $O(n)$ and it can be built in linear-time \cite{BerkmanV94,BenderF04}.
In a level ancestor query, we are given a vertex $x \in V(T)$ and a positive integer $h$, and we ask for the ancestor $y$ of $x$ such that $\pi_T(x,y)$ contains exactly $h$ edges. We can then find $z''$ by performing a binary search over the vertices of $A(t,e)$, as follows.

Let $e=(u,v)$, we perform a level ancestor query on $T$ to find the vertex $x$ of $\pi_T(v,t)$ that divides the path into roughly two halves. Let $x'$ be the parent of $x$, and let $y$ and $y'$ be the vertices of $\pi_T(x,t)$ and $\pi_T(v,x')$ of minimum labels, respectively. Notice that $y$ and $y'$ can be found in constant time by performing two \texttt{BVQ} queries on $\mathcal{T}_i$. If $\ell_i(y')\preceq e$, then we remember $y'$ as the best vertex found so far and we iterate the binary search in $\pi_T(v,x')$. Otherwise, if $e\prec\ell_i(y')$, then we compare $\ell_i(y)$ and $e$. If $\ell_i(y) \preceq e$, then we remember $y$ as the best vertex found so far and we iterate the binary search in $\pi_T(x,t)$. If $e \prec \ell_i(y)$, then we can complete our binary search and return the best vertex found, if any.

We have then proven the following:

\begin{theorem}
Let $G$ be a non-negatively real weighted and undirected $n$-vertex and $m$-edge graph, and let $s$ be a source node.
For any arbitrarily small $0 < \epsilon <1$, there exists an EFT $(1+\epsilon)$-SSDO that has size $O\left(n\cdot \frac{1}{\epsilon} \log \frac{1}{\epsilon}\right)$ and $O\left(\log n\cdot \frac{1}{\epsilon} \log \frac{1}{\epsilon}\right)$ query time, and that can be constructed using $O(mn+n^2 \log n)$ time and $O\left(m+n\cdot \frac{1}{\epsilon} \log \frac{1}{\epsilon}\right)$ space.
\end{theorem}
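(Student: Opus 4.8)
The plan is to assemble the final theorem from the three pieces already developed in this section: the correctness of the stored distances, the size bound, and the query procedure. Concretely, the preprocessing algorithm (Algorithm~\ref{alg:preprocessing}) runs in $O(mn+n^2\log n)$ time and $O\left(m+n\cdot\frac1\epsilon\log\frac1\epsilon\right)$ space by the last lemma, and it outputs the sets $S$ and $S'$ whose union has size $O\left(n\cdot\frac1\epsilon\log\frac1\epsilon\right)$ by Corollary~\ref{cor:size_of_type_1_paths}. The oracle stores: the SPT $T$ (with level-ancestor support built in linear time), the values in $S\cup S'$, the $k+1$ trees $\mathcal{T}_0,\dots,\mathcal{T}_k$ together with their \texttt{BVQ} data structures, and enough bookkeeping (least-common-ancestor support on $T$) to detect whether a failing edge $e=(u,v)$ lies in $T$ and whether the queried $t$ lies in $T_v$. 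Since $k=O\left(\frac1\epsilon\log\frac1\epsilon\right)$, the total size is $O\left(n\cdot\frac1\epsilon\log\frac1\epsilon\right)$, and the construction cost is dominated by the preprocessing algorithm, giving the claimed time/space for construction.

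Next I would argue correctness and query time. When $e\notin E(T)$ or $t\notin V(T_v)$, the oracle returns $d_T(s,t)=d_G(s,t)=d_{G-e}(s,t)$, which is exact. Otherwise, by Lemmas~\ref{lm:disjoint_paths}, \ref{lm:feasibility_bis} and Proposition~\ref{prop:feasibility}, the value ${\tt dist}(t,e)$ satisfies $d_{G-e}(s,t)\le {\tt dist}(t,e)\le\sqrt{1+\epsilon}\cdot d_{G-e}(s,t)$; and the decomposition established just above Lemma~\ref{lm:label_monotonicity} shows that ${\tt dist}(t,e)$ is either $d_{G-e}(s,v)+d_T(v,t)$ (available from $S'$ in $O(1)$ time) or $\min_i\delta_i$. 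For each $i$, the binary search over $\pi_T(v,t)$—using one level-ancestor query and two \texttt{BVQ} queries per step, hence $O(\log n)$ steps of $O(1)$ work—returns the vertex $z''\in A(t,e)$ closest to $v$ with $\ell_i(z'')\preceq e$, and by Lemma~\ref{lm:label_monotonicity} the quantity $d_{G-e''}(s,z'')+d_T(z'',t)$ (retrievable from $S_i$ and from $d_T$, i.e.\ from depths in $T$) is a $\sqrt{1+\epsilon}$-approximate upper bound to $\delta_i$ that is still a feasible length in $G-e$ by the same disjointness argument. Taking the minimum over the $k+1$ indices $i$ yields a value in $\left[d_{G-e}(s,t),\ (1+\epsilon)\,d_{G-e}(s,t)\right]$, since $\sqrt{1+\epsilon}\cdot\sqrt{1+\epsilon}=1+\epsilon$ and the overall minimum is at most $\sqrt{1+\epsilon}\cdot\min_i\delta_i\le \sqrt{1+\epsilon}\cdot{\tt dist}(t,e)\le(1+\epsilon)\,d_{G-e}(s,t)$, while feasibility (lower bound $d_{G-e}(s,t)$) follows because every candidate is the length of an actual $s$--$t$ walk in $G-e$. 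The query time is $O(k\log n)=O\left(\log n\cdot\frac1\epsilon\log\frac1\epsilon\right)$.

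The step I expect to be the main obstacle — and the one I would write out most carefully — is verifying that the binary search correctly locates $z''$ and that the value it reports is simultaneously (a) a valid path length in $G-e$ and (b) within $\sqrt{1+\epsilon}$ of $\delta_i$. Part (b) is exactly Lemma~\ref{lm:label_monotonicity} applied to $z'$ (the true minimizer of $\delta_i$) and $z''$, but one must check the search invariant: that after each halving step the vertex of smallest label on the examined sub-path is indeed the right "best so far," which relies on $\ell_i$ restricted to $A(t,e)$ having the structure exploited in the binary search (a \texttt{BVQ} returns a minimum-label vertex, and we only keep it when its label $\preceq e$). Part (a) needs the observation, analogous to Lemma~\ref{lm:disjoint_paths}/Lemma~\ref{lm:feasibility_bis}, that if $\ell_i(z'')=e''$ then $d_{G-e''}(s,z'')$ was added to $S$, so $\pi_{G-e''}(s,z'')$ avoids $e$ (it is edge-disjoint from $\pi_T(v'',z'')$, which contains $e$ when $e''\preceq e$ and $e''\neq e$; the case $e''=e$ is immediate), and $\pi_T(z'',t)$ also avoids $e$, so their concatenation is a genuine $s$--$t$ path in $G-e$. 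Combining the two stretch factors to get $1+\epsilon$ is then the final routine step.
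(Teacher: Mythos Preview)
Your proposal is correct and follows essentially the same approach as the paper: it assembles the preprocessing guarantees (Proposition~\ref{prop:feasibility}, Lemma~\ref{lm:feasibility_bis}, Lemma~\ref{lm:disjoint_paths}, Corollary~\ref{cor:size_of_type_1_paths}, and the running-time lemma) with the query machinery (the $k+1$ labeled trees, \texttt{BVQ} and level-ancestor structures, Lemma~\ref{lm:label_monotonicity}, and the binary search for $z''$) exactly as the paper does, and correctly composes the two $\sqrt{1+\epsilon}$ factors to obtain the final $(1+\epsilon)$ stretch. Your identification of the binary-search invariant and the feasibility of the candidate path in $G-e$ as the points deserving the most care is also on target and matches the paper's treatment.
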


\section{Lower bounds on the size of additive EFT ASPT and SSDO}
In this section, we give a lower bound on the bit size of an EFT $\beta(d)$-additive SSDO. Recall that after the failure of any edge, such an oracle must return an estimation $d'$ of the actual distance $d$ between $s$ and any node such that $d \leq d' \leq d+\beta(d)$, where $\beta$ is any positive real function. We are able to prove the following result:

\begin{theorem}\label{thm:lower_bound}
		Let $\beta(d)=k d^{1-\delta}$, for arbitrary $k \ge 1$ and $0 < \delta \le 1$. Then, there exist classes of polynomially weighted graphs with $n$ nodes such that:
	\begin{enumerate}
\item any \eft $\beta(d)$-additive ASPT has $\Omega(n^2)$ edges;
\item any \eft $\beta(d)$-additive SSDO has $\Omega(n^2)$ bit size for at least an input graph, regardless of its query time.
\end{enumerate}	
\end{theorem}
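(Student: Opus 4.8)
The plan is to exhibit a single family of graphs on which any $\beta(d)$-additive ASPT must keep $\Omega(n^2)$ edges, and then to argue that an oracle, being a compressed encoding of essentially the same information, must use $\Omega(n^2)$ bits on at least one member of a related family. I would build the ASPT lower bound via a bipartite-like gadget: take a set $L=\{a_1,\dots,a_p\}$ of ``left'' vertices and a set $R=\{b_1,\dots,b_q\}$ of ``right'' vertices with $p,q=\Theta(n)$, connect the source $s$ to each $a_i$, and encode an arbitrary bipartite graph $H\subseteq L\times R$ inside the weights so that, after the failure of a single, carefully chosen ``spine'' edge, the unique short $s\to b_j$ path of interest is forced through a specific $a_i$ with $(a_i,b_j)\in E(H)$. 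The weights are chosen polynomially in $n$ and scaled so that the edges $(a_i,b_j)\in E(H)$ carry distances $d$ for which the additive slack $\beta(d)=kd^{1-\delta}$ is strictly smaller than the gap between using the ``right'' edge $(a_i,b_j)$ and any detour; this forces every such edge to be present in any $\beta(d)$-additive ASPT, giving $\Omega(pq)=\Omega(n^2)$ edges. Concretely, I would make the fault $e$ a bridge on the $s$–side so that in $G-e$ the only way to reach the $b_j$'s with bounded stretch is through the $L$–$R$ edges, and I would pick the base distance $d$ large (polynomial in $n$) precisely so that $kd^{1-\delta}=o(d)$ and in fact smaller than the smallest weight difference engineered between a present edge and its cheapest replacement.

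For part~2, the natural route is an information-theoretic / counting argument. Consider the family obtained by letting $H$ range over all bipartite graphs on $L\times R$; there are $2^{\Theta(n^2)}$ of them. I would show that the answers returned by a $\beta(d)$-additive SSDO, over all single-edge-failure queries $(e,b_j)$, determine $H$ exactly — i.e., from the oracle's outputs one can read off, for each pair $(a_i,b_j)$, whether $(a_i,b_j)\in E(H)$, using the same separation of distances as above (an edge present yields a distance in one range, an edge absent forces a strictly larger distance outside the additive window). Hence two graphs with different $H$ must yield oracles with different bit-strings, so by pigeonhole at least one oracle in the family has $\Omega(n^2)$ bits, regardless of query time. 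I would be slightly careful here about whether the failing edge is unique and known to the oracle and about the fact that the oracle also stores the fault-free distances — but since the fault-free graph can be made identical across the family (only the $L$–$R$ weights inside the ``hidden'' part differ, and these are only exposed by the fault), this does not leak $H$ for free.

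The main obstacle, and where I would spend the most care, is the weight engineering: I must simultaneously (a) keep all weights polynomial in $n$, (b) make the relevant post-failure distances $d$ large enough that $kd^{1-\delta}$ is a genuinely sublinear slack, and (c) ensure the ``gap'' between a correct $L$–$R$ edge and every alternative route exceeds $\beta(d)$, for \emph{every} $\delta\in(0,1]$ and every $k\ge 1$ simultaneously (the constant hidden in $\Omega(n^2)$ may depend on $k,\delta$, which is fine). A clean way is to let the $L$–$R$ edge for a present pair have weight $\Theta(n^{c})$ for a suitable constant $c=c(\delta)$ and have any detour cost an extra $\Theta(n^{c'})$ with $c'$ chosen so that $n^{c'}>k\,n^{c(1-\delta)}$ for large $n$; then present/absent pairs are distinguished by the oracle even within the additive tolerance. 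I would also double-check the ASPT direction: an ASPT must contain, for the fault $e$, a path of length at most $d+\beta(d)$ to $b_j$, and by the gap property that path must use the edge $(a_i,b_j)$ itself, so all $|E(H)|$ such edges survive — yielding the $\Omega(n^2)$ bound on edges and, a fortiori, the same bound in bits for any explicit subgraph representation.
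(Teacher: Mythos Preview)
Your high-level strategy (bipartite gadget plus an information-theoretic counting argument) matches the paper's, but there is a concrete gap that would make the proof fail as written: you repeatedly commit to a \emph{single} failing bridge edge $e$. A single fault gives you at most $|R|=O(n)$ post-failure distances (one per target $b_j$), so it can force at most $O(n)$ specific $L$--$R$ edges into any ASPT, and for part~2 the oracle's answers to the $|R|$ queries $(e,b_j)$ carry only $O(n\log n)$ bits, far short of recovering an arbitrary $H$ with $\Theta(n^2)$ bits. You cannot ``read off, for each pair $(a_i,b_j)$, whether $(a_i,b_j)\in E(H)$'' from a single fault. The paper's construction fixes this by replacing the bridge with a \emph{spine path} $s=u_\eta,u_{\eta-1},\dots,u_0$ of $\Theta(n)$ zero-weight edges, together with escape edges $(u_i,v_i)$ of strictly increasing weights $x_i$. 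Failing $e_i=(u_i,u_{i-1})$ forces, for every $h$, the specific bipartite edge $(v_i,t_h)$; ranging $i=1,\dots,\eta$ then forces all $\eta^2$ bipartite edges (part~1) and lets the queries $(e_i,t_h)$ recover the full incidence matrix of $H$ (part~2). The multi-fault structure is essential, not optional.

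A second point you underestimate is the weight engineering, which is more delicate than ``pick $\Theta(n^c)$ versus $\Theta(n^{c'})$''. Because the faults are nested along the spine, you must \emph{simultaneously} guarantee $x_{i+1}-x_i>\beta(x_i+y)$ for every $i$ (so the next escape route is too long) and $2y>\beta(x_\eta+y)$ (so detouring through a wrong bipartite edge is too long), while keeping $x_\eta$ polynomial in $n$. The paper handles this by defining $x_{i+1}=x_i+\beta(x_i+y)+\gamma$ recursively and then proving a separate lemma that the resulting sequence stays bounded by $4y(2n)^{k+2+2/\delta}$; only then can $y$ be chosen large enough (and still polynomial) to satisfy the second inequality. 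This growth bound is where the hypothesis $\beta(d)=kd^{1-\delta}$ with $\delta>0$ is actually used, and it is not a one-line computation.
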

\begin{proof}
		We first discuss the lower bound on the size of any single-source \eft $\beta(d)$-additive spanner.
		Our construction is inspired by the one given in \cite{PP13} for EFT ASPTs on unweighted graphs.
		Consider a graph $G$ similar to the one shown in Figure~\ref{fig:n2_lower_bound}, consisting of:

\begin{figure}[ht]
	\centering
	\includegraphics[scale=1]{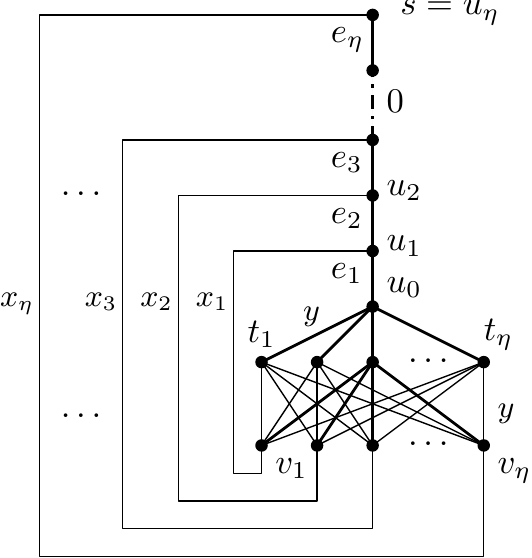}
	\caption{The graph $G$ used to show that a single-source \eft $\beta(d)$-additive spanner requires $\Omega(n^2)$ edges.}
	\label{fig:n2_lower_bound}
\end{figure}

		\begin{itemize}
			\item a path of $\eta+1$ vertices $\langle s = u_\eta, u_{\eta-1}, \dots, u_0 \rangle$, with $\eta \ge k+1$. We let $e_i = (u_i, u_{i-1})$ and we set $w(e_i) = 0$;
			
			\item a complete bipartite graph of $2\eta$ vertices, namely $t_1, \dots, t_\eta$ and $v_1, \dots, v_\eta$. Each edge $(t_i, v_i)$  has weight $y \ge 1$;
			
			\item a star connecting $u_0$ to the vertices in $\{ t_i : 1 \le i \le \eta \}$ whose edges have weight $y \ge 1$;
			
			\item and an edge $(u_i, v_i)$ of weight $x_i$, for every  $1 \le i \le \eta$.
		\end{itemize}

We will show how to set the weights $x_i$s and $y$ so that the only single-source \eft $\beta(d)$-additive spanner of $G$ is $G$ itself. First of all, we will choose $2y < x_1 < x_2 < \dots < x_\eta$. As a consequence, we have that $d_G(s, t_i ) = y$, and $d_G(s, v_i ) = 2y$, for every $i$. Therefore, the SPT of $G$ is similar to that shown in bold in Figure~\ref{fig:n2_lower_bound}.

The idea of the construction is that of suitably balancing the values $x_i$s and $y$ so that each edge of the bipartite graph will belong to every spanner of $G$. In particular, we would guarantee that when $e_i$ fails, the shortest path from $s$ to any vertex $t_h$ will be the path $\pi$ of weight $x_i + y$ passing through the edges $(u_i, v_i)$ and $(v_i, t_h)$. Moreover, any other path $\pi'$ towards $t_h$ will have a weight larger than $x_i + y + \beta(x_i + y)$. If $\pi' \neq \pi$, either $\pi'$ traverses an edge $(u_j, v_j)$ with $j>i$ or it traverses $(u_i, v_i)$ but not the edge $(v_i, t_h)$, hence it must contain at least three edges of weight $y$. We want $\pi'$ to be larger than $\pi$ by an additive term of $\beta(x_i+y)$. This is challenging due to the fact that if $y$ is too large then the path of length $x_j+y$ (with $j> i$) is comparable with the distance $x_i+y$, while if $y$ is too small then the path of length $x_i+3y$ could be good enough. Therefore, for each $i$, we require the following conditions to hold:

	\begin{gather*}
			\begin{cases}
				x_j + y > x_i + y + \beta(x_i + y) & \forall i \, \forall j > i\\
				x_i + 3y > x_i + y + \beta(x_i + y) & \forall i
			\end{cases} \\
			\Updownarrow\\
			\begin{cases}
				x_{i+1} + y > x_i + y + \beta(x_i + y) & \forall i < \eta \\
				2y > \beta(x_i + y) & \forall i
			\end{cases}
	\end{gather*}
	
	\noindent where we used the fact that the values $x_i$s are strictly increasing.
	
	Let $0 < \gamma \le 1$, and let
 $x_1 = 2y + \gamma$, and $x_{i+1} = x_i + \beta(x_i + y) + \gamma$. Notice that the first set of equations is now trivially satisfied. Moreover, in the second set of equations, since $\beta(\cdot)$ is an increasing function, the latter equation (i.e., the equation for $i=\eta$) implies all the others. Hence, the whole system reduces to:	
	\begin{equation}
		\label{eq:condition_x}
		x_\eta + 3y > x_\eta + y + \beta(x_\eta + y) \iff 2y > \beta(x_\eta + y)	
	\end{equation}

By defining $z_i = x_i + y$, we have $z_1 = 3y + \gamma$, $z_{i+1} = z_i + \beta(z_i) + \gamma$ and \eqref{eq:condition_x} becomes:
	\begin{equation}	
		\label{eq:condition_z}
		2y > \beta(z_\eta)
	\end{equation}
	
We now prove an upper bound to the value of $z_\eta$s:
	\begin{lemma}
		$z_\eta \le 4y (2n)^{k + 2 + 2/\delta}$ for any choice of $0<\gamma \le 1$.
	\end{lemma}
	\begin{proof}
		Consider the sequence $\langle z_1, z_2, \dots, z_i, \dots, z_\eta \rangle$ and let $\tau \in \{1, \dots, \eta \}$ be the largest index $i$ such that $k z_i^{1-\delta} + \gamma > \frac{k}{i} z_i$, if it exists.
		
		If such an index $\tau$ exists then, for $i > \tau$, we have $k z_i^{1-\delta} + \gamma \le \frac{k}{i} z_i$ and hence:
		\[
			z_{i+1} = z_i + \beta(z_i) + \gamma = z_i + k z_i^{1-\delta} + \gamma  \le z_i + \frac{k}{i} z_i = \left(1 + \frac{k}{i} \right) z_i
		\]
		from which we can easily get:
		\begin{multline*}		
			z_\eta \le  z_{\tau+1} \!\! \prod_{i=\tau+1}^{\eta-1} \!\!\! \left( \! 1 + \frac{k}{i} \right)
			\! \le z_{\tau+1} \prod_{i=1}^{\eta-1} \frac{\lceil k \rceil + i}{i}
			= \frac{z_{\tau+1}}{\lceil k \rceil !} \!\! \prod_{i=0}^{\lceil k \rceil - 1} (\eta + i) \\
			\le z_{\tau+1} (\eta + k)^{\lceil k \rceil} \!\le z_{\tau+1} (2\eta)^{k+1}
		\end{multline*}

		To bound $z_{\tau+1}$ we use the fact that $k z_\tau^{1-\delta} + \gamma > \frac{k}{\tau} z_\tau \implies
		(k+1) z_\tau^{1-\delta} > \frac{k}{\tau} z_\tau \implies \left( \frac{k+1}{k} \tau \right)^{1/\delta} > z_\tau$ to write:
		\begin{multline*}
			z_{\tau+1} = z_\tau + \beta(z_\tau) + \gamma = z_\tau + k z_\tau^{1-\delta} + \gamma
			\le (k+1) z_\tau + \gamma
			\le (k+2) z_\tau \\
			< (k+2) \left( \frac{k+1}{k} \tau \right)^{1/\delta} 
			\le (k+2)^{1+\frac{1}{\delta}} \eta^{1/\delta}.
		\end{multline*}
		Combining the previous two inequalities, we obtain the claim:
		\[
			z_\eta  \le z_{\tau+1 }(2 \eta )^{k+1} < (k+2)^{1+1/\delta} \eta^{1/\delta} (2\eta)^{k+1}
			< (2\eta)^{k+2+2/\delta}
			< 4y (2n)^{k + 2 + 2/\delta}.
		\]
				
		\noindent If no such index $\tau$ exists, then $k z_i^{1-\delta} + \gamma \le \frac{k}{i} z_i$ for every $i=1,\dots,\eta$ and, with a similar argument, we have:
		\begin{align*}		
			z_\eta & \le z_1 \prod_{i=1}^{\eta-1} \left(1 + \frac{k}{i} \right) \le z_1 (2\eta)^{k+1}  = (3y+\gamma) (2\eta)^{k+1} \\
			& < 4y (2n)^{k+1}  < 4y (2\eta)^{k + 2 + 2/\delta}.
		\end{align*}
	\end{proof}		
\noindent Therefore, \eqref{eq:condition_z} is satisfied for a large enough value of $y$, indeed:
\begin{multline*}
	2y > \beta( 4y (2n)^{k + 2 + 2/\delta} ) = y^{1-\delta} k \left( 4 (2n)^{k + 2 + 2/\delta} \right)^{1-\delta}
	\iff \\
	y^\delta > \frac{k}{2} \left( 4 (2n)^{k + 2 + 2/\delta} \right)^{1-\delta}
	\iff
	y > \left( \frac{k}{2} \right)^\frac{1}{\delta} \left(  4 (2n)^{k + 2 + 2/\delta} \right)^{\frac{1}{\delta}-1}.
\end{multline*}
\noindent Notice that both $y$ and each $x_i$ are at most $n^c$ for a suitable constant $c$ depending on $k$ and $\delta$. This means that $O(\log n)$ bits suffice to encode the weight of any edge.

To conclude the proof, we now discuss the lower bound of $\Omega(n^2)$ bits on the size of any \eft $\beta(d)$-additive SSDO. We will use an argument similar to the one shown in \cite{TZ05}.
To this respect, consider the set $\mathcal{G}$ of all the subgraphs obtained from $G$ by removing any subset of edges belonging to the bipartite graph. Notice that $|\mathcal{G}| = 2^{\Theta(n^2)}$. For each graph $G_j \in \mathcal{G}$, let $\mathcal{O}_j$ be the corresponding \eft $\beta(d)$-additive SSDO. Given two different graphs $G_j$ and $G_h$, we now prove that $\mathcal{O}_j$ must differ from $\mathcal{O}_h$.
Indeed, w.l.o.g., let $(v_i, t)$ be an edge in $E(G_j) \setminus E(G_h)$.
We query the distance from $s$ to $t$ when edge $e_i$ is failing. Oracle $\mathcal{O}_j$ must return a distance between $x_i + y$ and $x_i + y + \beta(x_i + y)$, while oracle $\mathcal{O}_h$ must return a distance $d'$ that is at least the minimum between $x_j + y$, for some $j > i$, and $x_i + 3y$. As we have proved that $d' > x_i + y + \beta(x_i + y)$, we have that the two oracles must return different answers, and hence they differ.
Since there are $2^{\Theta(n^2)}$ distinct oracles, at least one of them must have a size of at least $\log 2^{\Theta(n^2)} = \Theta(n^2)$.
\end{proof}

\bibliographystyle{plainurl}
\bibliography{biblio}

\end{document}